\newcommand{\tags}{\usetag{arxiv,full}}
	\newtheorem{theorem}{Theorem}[section]%
	\newtheorem{lemma}[theorem]{Lemma}
	\newtheorem{corollary}[theorem]{Corollary}
	\newtheorem{Definition}[theorem]{Definition}
\newtheorem{claim}{Claim}[section]
	\crefname{@theorem}{Theorem}{Theorems}
	\crefname{Definition}{Definition}{Definitions}
\newlist{inenum}{enumerate*}{1}
\setlist[inenum]{label=(\roman*)}
\newcommand{\ie}{i.~e.}
\newcommand{\propcomp}[1]{%
	\overline{#1}
}
\newcommand{\propfar}[2]{%
	\propcomp{#1}_{#2}
}
\newcommand{\sizepropcomp}[2]{%
	\propcomp{#1}_{#2}
}
\newcommand{\sizepropfar}[3]{%
	\propfar{#1}{#2;#3}
}
\newcommand{\picomp}{%
	\propcomp{\Pi}
}
\newcommand{\npi}{%
	\Pi_n
}
\newcommand{\npicomp}{%
	\sizepropcomp{\Pi}{n}
}
\newcommand{\piepsfar}{%
	\propfar{\Pi}{>\varepsilon}
}
\newcommand{\npiepsfar}{%
	\sizepropfar{\Pi}{n}{>\varepsilon}
}
\newcommand{\queriessymb}{%
	q
}
\newcommand{\queries}[1]{%
	\queriessymb_{#1}
}
\newcommand{\pqueries}{%
	\queries{\Pi}
}
\newcommand{\cqueriessymb}{%
	\hat{\queriessymb}
}
\newcommand{\cqueries}[1]{%
	\cqueriessymb_{#1}
}
\newcommand{\pcqueries}{%
	\cqueries{\Pi}
}
\newcommand{\defeq}{%
	:=%
}
\newcommand{\N}{\setn}
\newcommand{\setn}{%
	\mathbb{N}%
}
\newcommand{\R}{\setr}
\newcommand{\setr}{%
	\mathbb{R}%
}
\newcommand{\norm}[1]{%
	\lVert #1 \rVert_1%
}
\newcommand{\bignorm}[1]{%
	\big\lVert #1 \big\rVert_1%
}
\newcommand{\biggnorm}[1]{%
	\bigg\lVert #1 \bigg\rVert_1%
}
\newcommand{\disk}[1]{%
	\mbox{$#1$-disc}%
}
\newcommand{\disks}[1]{%
	\mbox{$#1$-discs}%
}
\newcommand{\kdisk}{%
	\disk{k}%
}
\newcommand{\kdisks}{%
	\disks{k}%
}
\newcommand{\fdisk}[3]{%
	\mathrm{disc}_{#1}(#2, #3)%
}
\newcommand{\fkdisk}[2]{%
	\fdisk{k}{#1}{#2}%
}
\newcommand{\diskfreq}[2]{%
	\mathrm{freq}_{#1} \! \left( #2 \right)%
}
\newcommand{\kdiskfreq}[1]{%
	\diskfreq{k}{#1}%
}
\newcommand{\kdiskfreqsub}[2]{%
	\kdiskfreq{#2 \mid #1}%
}
\newcommand{\kdiskfreqent}[2]{%
	\kdiskfreq{#1}_{#2}%
}
\newcommand{\kdiskfreqsubent}[3]{%
	\kdiskfreq{#2 \mid #1}_{#3}%
}
\newcommand{\diskset}[1]{%
	\mathcal{T}_{#1}%
}
\newcommand{\kdiskset}{%
	\diskset{k}%
}
\newcommand{\numdiskssymb}{%
	N%
}
\newcommand{\numdisks}[2]{%
	\numdiskssymb(#1, #2)%
}
\newcommand{\numdkdisks}{%
	\numdisks{d}{k}%
}
\newcommand{\alonsize}[3]{%
	M_{#1}(#2,#3)%
}
\newcommand{\dalonsize}[2]{%
	\alonsize{d}{#1}{#2}
}
\newcommand{\ddkalonsize}{%
	\alonsize{d}{\delta}{k}%
}
\newcommand{\dfp}[2]{%
	$(#1, #2)$-DFP%
}
\newcommand{\dkdfp}{%
	\dfp{\delta}{k}%
}
\newcommand{\bug}[2]{%
	$(#1, #2)$-blow-up graph%
}
\title{Every Testable (Infinite) Property of Bounded-Degree Graphs Contains an Infinite Hyperfinite Subproperty
\thanks{First and third author acknowledge the support by ERC grant No. 307696.}}
\author{
	Hendrik Fichtenberger
	\thanks{Department of Computer Science, TU Dortmund. Email: 	hendrik.fichtenberger@tu-dortmund.de. \href{https://orcid.org/0000-0003-3246-5323}{ORCID~iD: 0000-0003-3246-5323}}
	\and
	Pan Peng
	\thanks{Department of Computer Science, University of Sheffield. Email: p.peng@sheffield.ac.uk.}
	\and
	Christian Sohler
	\thanks{Department of Computer Science, TU Dortmund. Email: christian.sohler@tu-dortmund.de.}
}
\date{}
\begin{document}
	\maketitle
	
	\tagged{soda}{

		\fancyfoot[R]{\scriptsize{Copyright \textcopyright\ 2019\\
		Copyright for this paper is retained by authors}}

	}

	\begin{abstract}
		\tagged{soda}{\small\baselineskip=9pt}
		One of the most fundamental questions in graph property testing is to characterize the combinatorial structure of properties that are testable with a constant number of queries. We work towards an answer to this question for the bounded-degree graph model introduced in \cite{GR02:testing}, where the input graphs
		have maximum degree bounded by a constant $d$. In this model, it is known (among other results) that every \emph{hyperfinite} property is constant-query testable \cite{NS13:hyperfinite}, where, informally, a graph property is hyperfinite, if for every $\delta >0$ every graph in the property can be partitioned into small connected components by removing $\delta n$ edges.
		
		In this paper we show that hyperfiniteness plays a role in \emph{every} testable property, i.e. we show that every testable property is either 
		finite (which trivially implies hyperfiniteness and testability) or contains an infinite hyperfinite subproperty. A simple consequence of our result is that no
		infinite graph property that only consists of expander graphs is constant-query testable.
		
		Based on the above findings, one could ask if every infinite testable non-hyperfinite property might contain an infinite family of expander (or near-expander) graphs. We show that this is not true. Motivated by our counter-example we develop a theorem that shows that we can partition the set of vertices of every bounded
		degree graph into a constant number of subsets and a separator set, such that the separator set is small and the distribution of \kdisks\ on every subset of
		a partition class, is roughly the same as that of the partition class if the subset has small expansion.
	\end{abstract}

\section{Introduction}

Understanding the structure of very large graphs like social networks or the webgraph is a challenging task. Given the size of these networks, it is often
hopeless to compute structural information exactly. A feasible approach is to design random sampling algorithms that only inspect a small portion of the
graph and derive conclusions about the structure of the whole graph from this random sample. However, there are different ways to sample from graphs 
(random induced subgraphs, random sets of edges, random walks, random BFS, %
etc.) and also many structural graph properties. This raises the question,
which sampling approaches (if any) are suitable to detect or approximate which structural properties.

Graph property testing provides a formal algorithmic framework that allows us to study the above setting from a complexity theory point of view. 
In this framework, given oracle access to an input graph, our goal is to distinguish between the case that the graph satisfies 
some property or  that it is ``far from'' having the property by randomly sampling from the graph. Here, a graph property denotes a set of graphs that is invariant under graph isomorphism.
Both oracle access and the notion ``far from'' depend on the representation of the graph. Several models have been proposed in the past two 
decades for dealing with different types of graphs~(see the recent book \cite{Gol17}). 

For dense graphs, Goldreich et al.~\cite{GGR98:testing} introduced the \emph{adjacency matrix} model, in which the algorithm can perform any \emph{vertex-pair} query to the oracle. That is, upon an input vertex pair $u,v$, the oracle returns $1$ if there is an edge between $u,v$ and $0$ otherwise. A graph is called $\varepsilon$-far from having a property $\Pi$ if one has to modify more that $\varepsilon n^2$ vertices to make it satisfy $\Pi$ for any small constant $\varepsilon$. Since the time when the model was introduced, many properties $\Pi$ were found to be \emph{testable} in the sense that there exists an algorithm, called \emph{tester}, that can distinguish if a graph satisfies $\Pi$ or is $\varepsilon$-far from having $\Pi$ while only making a \emph{constant} number of queries. The research in this model has culminated in the seminal work by Alon et al.~\cite{AFNS09:characterization}, who gave a full characterization of constant-query testable properties by the regularity lemma. 

Our understanding of property testing for sparse graphs (e.g., bounded degree graphs) is much more limited. Goldreich and Ron~\cite{GR02:testing} initiated the study of property testing for bounded degree graphs in the \emph{adjacency list} model. A graph $G$ is called a \emph{$d$-bounded} graph if its maximum degree is at most $d$, which is assumed to be a constant. The property tester for a $d$-bounded graph is given oracle access to the adjacency list of the graph, that is, upon an input $(u,i)$ such that $i\leq d$, the oracle returns the $i$-th neighbor of $u$ if such a neighbor exists, and a special symbol otherwise. A $d$-bounded graph is said to be $\varepsilon$-far from having the property $\Pi$ if one needs to modify more than $\varepsilon dn$ edges to obtain a graph that satisfies $\Pi$. In this model, there exist several properties that are known to be testable with a constant number of queries (see discussion below). There also exist a number of properties that require $\tilde{O}(\sqrt{n})$ or $\tilde{O}(n^{1/2+c})$ queries, including bipartiteness~\cite{GR99:bipartiteness}, expansion~\cite{GR00:expansion,CS10:expansion,NS10:expansion,KS11:expansion}, $k$-clusterability~\cite{CPS15:cluster} and one-sided error minor-freeness~\cite{CzuFin14,FicSub18,KSS18:minor}. For the property of being $3$-colorable there is a known $\Omega(n)$ lower bound on the number of queries needed to test the property~\cite{BOT02:color}.

One of the most important questions in this area is to give a purely combinatorial characterization of which graph properties are testable with a constant number of queries. Goldreich and Ron were the first to show that a number of fundamental graph properties including connectivity, $k$-edge connectivity, subgraph-freeness, cycle-freeness, Eulerian and degree regularity can be tested with constant queries in bounded degree graphs~\cite{GR02:testing}. A number of properties with small separators are now known to be testable in a constant number of queries, such as minor closed properties \cite{BSS10:minor,HKNO09:local}, and hyperfinite properties~\cite{NS13:hyperfinite}. In particular, in the latter work it is proved that every property is constant-query testable in hyperfinite graphs. There are also constant-query properties that are closed under edge insertions, including $k$-vertex connectivity~\cite{YI12:vertex}, perfect matching~\cite{YYI12:constant}, sparsity matroid~\cite{ITY12:matroid} and the supermodular-cut condition~\cite{TY15:supermodular}. Furthermore, there exist global monotone properties\footnote{A graph property is called monotone if it is closed under edge deletions.} that contain expander graphs and can be tested with constant queries, including the property of being subdivision-free~\cite{KY13:subdivision}. There also exist some work on testable properties in some special classes of bounded degree graphs. For example, it is known that every hereditary property\footnote{A graph property is called hereditary if it is closed under vertex deletions.} is testable with a constant number of queries in non-expanding $d$-bounded graphs~\cite{CSS09:hereditary}. A property called $\delta$-robust spectral property is constant-query testable in the class of high-girth graphs~\cite{CKSV17:spectra}. However, very little is known about characteristics of all testable properties in general.

\subsection{Our Results}

Although many properties are known to be constant-query testable in bounded degree graphs, our knowledge on characteristics of \emph{all} testable properties is fairly restricted. One prominent example of testable properties is the family of hyperfinite properties \cite{NS13:hyperfinite}, which includes planar graphs and graphs that exclude any fixed minor (see e.g.,~\cite{BSS10:minor,HKNO09:local}). For the statement of our results and the discussion of techniques, we state the definition of hyperfinite graphs at this place.

\begin{Definition}\label{def:hyperfinite}
	Let $\varepsilon\in (0,1]$ and $k\geq 1$. A graph $G$ with maximum degree bounded by $d$ is called \emph{$(\varepsilon,k)$-hyperfinite} if one can remove at most $\varepsilon d |V(G)|$ edges from $G$ so that each connected component of the resulting graph has at most $k$ vertices. For a function $\rho: \R^+\rightarrow\N^+$, a graph $G$ is called \emph{$\rho$-hyperfinite} if $G$ is $(\varepsilon,\rho(\varepsilon))$-hyperfinite for every $\varepsilon>0$. A set (or property) $\Pi$ of graphs is called \emph{$\rho$-hyperfinite} if every graph in $\Pi$ is $\rho$-hyperfinite. A set (or property) $\Pi$ of graphs is called \emph{hyperfinite} if it is $\rho$-hyperfinite for some function $\rho$.
\end{Definition}

Also, many testable properties are known that are not hyperfinite. Our main result is that, nevertheless, for infinite properties the existence of an infinite set of hyperfinite graphs in the property is a necessary condition for its constant-query testability (finite properties are trivially hyperfinite). Since some of these testable properties, e.g., subdivision-freeness, contain expander graphs, a hyperfinite subproperty might seem somewhat surprising. (A subproperty of a property $\Pi$ is a subset of graphs in $\Pi$ that is also invariant under graph isomorphism.) Indeed, the complement of every non-trivially constant-query testable property also contains hyperfinite graphs, where a property is non-trivially testable if it is testable and there exists an $\varepsilon>0$ such that there is
an infinite number of graphs that are $\varepsilon$-far from $\Pi$.

\begin{theorem}
	\label{thm:main_subproperty}
	Every constant-query testable property $\Pi$ of bounded-degree graphs is either finite or contains an infinite hyperfinite subproperty. 
	Also, the complement of every non-trivially constant-query testable graph property contains an infinite hyperfinite subproperty.
\end{theorem}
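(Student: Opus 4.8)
\emph{Proof idea.} Two facts drive the argument. First, a constant-query tester may be taken \emph{canonical}: at proximity parameter $\varepsilon$ it samples $s(\varepsilon)$ uniformly random vertices, explores their radius-$r(\varepsilon)$ neighbourhoods, and accepts or rejects as a randomised function of the resulting multiset of $r(\varepsilon)$-discs and of $n=|V(G)|$ alone; hence, up to an error vanishing as $n\to\infty$, its acceptance probability on $G$ is a function of $\mathrm{freq}_{r(\varepsilon)}(G)$ and $n$ that changes by at most $\tfrac{s(\varepsilon)}{2}\,\norm{\mathrm{freq}_{r(\varepsilon)}(G)-\mathrm{freq}_{r(\varepsilon)}(G')}$ when $G$ is replaced by a graph $G'$ of the same order. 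Second — the technical core, and where I expect the real difficulty — there is a \emph{disc-frequency preserving bounded-component construction}: for all $r$ and $\eta>0$ there is $K=K(r,\eta,d)$ such that every $d$-bounded graph $G$ (on sufficiently many vertices) admits a $d$-bounded graph $\widehat G$ with $|V(\widehat G)|=|V(G)|$, every connected component of $\widehat G$ of size at most $K$, and $\norm{\mathrm{freq}_r(\widehat G)-\mathrm{freq}_r(G)}\le\eta$; in particular $\widehat G$ is $\rho$-hyperfinite for the constant function $\rho\equiv K$. This ``bounded-component realisation of the local statistics'' is essentially the vertex-partition theorem announced in the abstract and is in the spirit of the hyperfinite-testing machinery of \cite{NS13:hyperfinite}; proving it (that the $r$-disc distribution of an arbitrary bounded-degree graph is $\ell_1$-approximable, with the same number of vertices, by a graph of constant-bounded component size) is the main obstacle.

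Granting these, the complement assertion is short. Non-trivial testability provides $\varepsilon>0$ for which the set $F$ of graphs $\varepsilon$-far from $\Pi$ is infinite, hence contains graphs $F_1,F_2,\dots$ of strictly increasing order. Fix the canonical tester at proximity $\varepsilon$, with radius $r:=r(\varepsilon)$ and $s:=s(\varepsilon)$; put $\eta:=1/(2s)$ and let $\widehat F_i$ be the construction applied to $F_i$ with parameters $r,\eta$. The tester rejects the $\varepsilon$-far graph $F_i$ with probability at least $2/3$, so, as $|V(\widehat F_i)|=|V(F_i)|$ and $\norm{\mathrm{freq}_r(\widehat F_i)-\mathrm{freq}_r(F_i)}\le\eta$, it rejects $\widehat F_i$ with probability at least $2/3-1/4-o(1)>1/3$ for $i$ large; thus $\widehat F_i\notin\Pi$, i.e.\ $\widehat F_i\in\propcomp{\Pi}$. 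The graphs $\widehat F_i$ have distinct orders and are all $(\,\cdot\,,K)$-hyperfinite, so the isomorphism-closure of $\{\widehat F_i: i\ \text{large}\}$ within $\propcomp{\Pi}$ is an infinite hyperfinite subproperty.

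For the main assertion, let $\Pi$ be testable and infinite (so it contains graphs of unbounded order), and suppose towards a contradiction that $\Pi$ has no infinite $\rho$-hyperfinite subfamily for any $\rho$ — equivalently, that all Benjamini--Schramm limit points of $\Pi$ are non-hyperfinite. Here I would invoke a uniform-robustness consequence of the theory of hyperfinite graph convergence: under this hypothesis there is $\delta_0>0$ such that for every $k$ only finitely many graphs of $\Pi$ are $(\delta_0,k)$-hyperfinite. Put $\varepsilon_0:=\delta_0/4$, take the canonical tester at proximity $\varepsilon_0$ with radius $r_0$ and $s_0$ samples, set $\eta_0:=1/(2s_0)$ and $K_0:=K(r_0,\eta_0,d)$, pick $G_i\in\Pi$ with $n_i:=|V(G_i)|\to\infty$, and let $\widehat G_i$ be the construction applied to $G_i$ with parameters $r_0,\eta_0$.

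I claim $\widehat G_i$ is $\varepsilon_0$-far from $\Pi$ for all large $i$. If not, some $H\in\Pi$ is obtained from $\widehat G_i$ by at most $\varepsilon_0 d n_i$ edge modifications; deleting from $H$ the at most $\varepsilon_0 d n_i$ edges not present in $\widehat G_i$ leaves a spanning subgraph of $\widehat G_i$, all of whose components have size at most $K_0$, so $H$ is $(\varepsilon_0,K_0)$-hyperfinite and therefore — since $\varepsilon_0<\delta_0$ — also $(\delta_0,K_0)$-hyperfinite; by the robustness statement only finitely many such $H$ lie in $\Pi$, so their orders are bounded, contradicting $|V(H)|=n_i\to\infty$. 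Finally, the canonical tester at proximity $\varepsilon_0$ accepts $G_i\in\Pi$ with probability at least $2/3$; since $|V(\widehat G_i)|=n_i$ and $\norm{\mathrm{freq}_{r_0}(G_i)-\mathrm{freq}_{r_0}(\widehat G_i)}\le\eta_0$, it accepts $\widehat G_i$ with probability at least $2/3-1/4-o(1)>1/3$ for $i$ large; but $\widehat G_i$ is $\varepsilon_0$-far from $\Pi$, so the tester must reject it with probability at least $2/3$ — a contradiction. Hence the supposition was false, $\Pi$ contains an infinite $\rho$-hyperfinite subfamily, and its isomorphism-closure inside $\Pi$ is the desired infinite hyperfinite subproperty.
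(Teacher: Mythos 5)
Your second half (the complement statement) is essentially the paper's own argument (\cref{thm:complement_hypf_subproperty}), and it is correct: blow up an $\varepsilon$-far graph into disjoint constant-size copies of a disc-frequency preserver and use the canonical tester to conclude the blow-up lies outside $\Pi$. Note also that the ``bounded-component realisation of the local statistics'' that you single out as the main obstacle is not something you need to prove: it is Alon's result (\cref{thm:alon}, Proposition~19.10 in \cite{Lov12:large}) combined with the trivial blow-up of \cref{def:blowup_graph} and \cref{thm:blowup_locality}; it is unrelated to the partitioning theorem announced in the abstract, which plays no role in the main theorem.

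The first half, however, hinges on an unproved ``uniform-robustness'' claim: that if $\Pi$ has no infinite hyperfinite subproperty, then there is $\delta_0>0$ such that for every $k$ only finitely many graphs of $\Pi$ are $(\delta_0,k)$-hyperfinite. This is not a consequence of hyperfinite graph-limit theory, and as a statement about arbitrary graph families it is \emph{false}. Fix $d$-bounded $\phi$-expanders of every size and let $H_{m,n}$ consist of one expander component on $\lceil n/2^m\rceil$ vertices plus the remaining vertices split into expander components of size $m$. Splitting an expander on $N$ vertices into components of size at most $k\le N/2$ forces the removal of at least $\phi d N/2$ edges. Hence every $H_{m,n}$ is $(2^{-m},m)$-hyperfinite, so for every $\delta$ there is a $k$ with infinitely many $(\delta,k)$-hyperfinite members of $\Pi:=\{H_{m,n}\}$; yet no infinite subfamily of $\Pi$ is hyperfinite, because a subfamily with $m$ bounded fails already at $\varepsilon'<\phi 2^{-m}/2$ (the big expander must be broken), while a subfamily with $m$ unbounded would need $\rho(\varepsilon)\ge m/2$ for $\varepsilon<\phi/4$ (the small expanders must be broken). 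So your dichotomy collapses exactly where the real work lies; for testable $\Pi$ the implication does hold, but only because the theorem you are proving holds, so invoking it is circular (the asserted equivalence with non-hyperfiniteness of all Benjamini--Schramm limit points is likewise unjustified, since hyperfiniteness of a family requires a single function $\rho$, i.e.\ uniformity over the family). The paper closes this gap constructively: \cref{thm:hypf_subproperty_step} produces, for each fixed $\varepsilon$, a graph in $\Pi_n$ that is $(\varepsilon,O_\varepsilon(1))$-hyperfinite with $t$-disc frequencies close to the original's, and \cref{thm:hypf_subproperty_main} iterates this with $\varepsilon_i,\delta_i$ halving and $k_i$ growing, using the Benjamini--Schramm--Shapira stability lemma (\cref{thm:local_encondig_hypf}) so that each new graph inherits the approximate $\varepsilon_j$-hyperfiniteness of all earlier ones; the final graph $H^{(n)}$ is then $(\varepsilon,\rho(\varepsilon))$-hyperfinite simultaneously for all $\varepsilon$ with an $n$-independent $\rho$, and these graphs form the infinite hyperfinite subproperty. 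You would need an argument of this kind (or a genuine, testability-specific proof of your uniformization claim) for your first half to go through.
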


To our best knowledge, our theorem gives the first non-trivial result on the combinatorial structure of \emph{every} constant-query testable property in bounded-degree graphs. A direct corollary from our main result is that expansion and the $k$-clusterability property are not constant-query testable, as any hyperfinite graph will have many small subsets with small expansion and thus does not satisfy the properties. Indeed, a much stronger lower bound of $\Omega(\sqrt{n})$ on the query complexity for testing these two properties was already known prior to this work~\cite{GR00:expansion}. However, our result further implies that every infinite intersection of a family of expander graphs with any other property is also not testable.

\begin{corollary}
	Let $\Pi$ be a property that does not contain an infinite hyperfinite subproperty, and let $\Pi'$ be an arbitrary property such that $\Pi \cap \Pi'$ is an infinite set. Then, $\Pi \cap \Pi'$ is not testable.
\end{corollary}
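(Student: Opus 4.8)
The plan is to argue by contraposition and reduce everything to \Cref{thm:main_subproperty}. Suppose, towards a contradiction, that $\Pi \cap \Pi'$ is constant-query testable. The first thing to check is that $\Pi \cap \Pi'$ really is a graph property: it is the intersection of two sets that are each invariant under graph isomorphism, hence itself invariant under graph isomorphism. Since $\Pi \cap \Pi'$ is an infinite set by hypothesis, it is in particular not finite, so the first part of \Cref{thm:main_subproperty} applies and produces an infinite hyperfinite subproperty $\Sigma \subseteq \Pi \cap \Pi'$.

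The key step is then to observe that $\Sigma$ is not merely a subproperty of $\Pi \cap \Pi'$ but also a subproperty of $\Pi$. Indeed, $\Sigma$ is closed under isomorphism (being a subproperty), and $\Sigma \subseteq \Pi \cap \Pi' \subseteq \Pi$; the property of ``being a subproperty of'' is transitive in this sense. Moreover $\Sigma$ is still infinite and still hyperfinite, since both of these attributes are inherited when we regard $\Sigma$ as a subset of the larger ambient property $\Pi$. Thus $\Sigma$ is an infinite hyperfinite subproperty of $\Pi$, contradicting the assumption that $\Pi$ contains no such subproperty. Hence $\Pi \cap \Pi'$ cannot be testable.

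Since the argument is essentially a single invocation of the main theorem, I do not expect a real obstacle; the only points needing (routine) care are the two bookkeeping observations above — that intersecting $\Pi$ with an \emph{arbitrary} $\Pi'$ preserves isomorphism-invariance, and that a hyperfinite, infinite subproperty witnessed inside $\Pi \cap \Pi'$ is automatically witnessed inside $\Pi$. One should also confirm that ``testable'' in the corollary is meant in the constant-query sense used throughout the paper, so that \Cref{thm:main_subproperty} applies verbatim.
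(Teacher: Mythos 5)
Your proposal is correct and is exactly the argument the paper intends: the corollary is a direct consequence of \cref{thm:main_subproperty}, obtained by noting that a testable infinite $\Pi \cap \Pi'$ would yield an infinite hyperfinite subproperty, which (being an isomorphism-invariant subset of $\Pi$) contradicts the assumption on $\Pi$. The two bookkeeping checks you flag (isomorphism-invariance of the intersection, and that the witness inside $\Pi \cap \Pi'$ is also a witness inside $\Pi$) are indeed all that is needed.
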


Note that in general, the intersection of a property that is not constant-query testable with another property may be testable. For example, the property of being planar and bipartite is testable since it is a hyperfinite property \cite{NS13:hyperfinite}. However, bipartiteness is not constant-query testable \cite{GR02:testing}.

We then study the question whether a similar result can be obtained for \emph{expander} or near-expander subproperties in testable non-hyperfinite properties. Expander graphs are those that are well connected everywhere, and thus can be thought as anti-hyperfinite graphs. Indeed, many known testable, while non-hyperfinite, properties do contain infinite expander subproperties. Typical examples include $k$-connectivity, subgraph-freeness and subdivision-freeness. However,
this turns out to not be the case in general. We show that there exists a testable property that is not hyperfinite and every graph in the property 
has distance $\Omega(n)$ to being an expander graph: The property consists of all graphs that have a connected component on $\lceil |V|/2\rceil$ vertices
and all other vertices are isolated. 

\begin{theorem}
\label{theorem:no-expander}
There exists an infinite graph property $\Pi$ of bounded-degree graphs such that
\begin{itemize}
\item
$\Pi$ is testable (with $2$-sided error) with query complexity $O(d/\varepsilon^2)$,
\item
$\Pi$ is not hyperfinite,
\item
every graph in $\Pi$ differs in $\Omega(n)$ edges from every connected graph.
\end{itemize}
\end{theorem}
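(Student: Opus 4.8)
The plan is to let $\Pi$ consist, for a fixed degree bound $d\ge 3$, of all $d$-bounded graphs $G$ on $n$ vertices (with $n$ above a constant $n_0$; the finitely many smaller orders are handled separately and affect none of the claims) that have a single connected component $C$ with $|C|=\lceil n/2\rceil$ while the remaining $\lfloor n/2\rfloor$ vertices are isolated. This set is clearly closed under isomorphism and infinite. Two of the three claims are then short. For non-hyperfiniteness, fix a family of connected $d$-regular expanders $H_m$ (infinitely many $m$) with edge expansion at least a constant $h>0$; a standard expansion/averaging argument shows that deleting fewer than $hm/2$ edges cannot leave all components of $H_m$ of size $\le m/2$, so there is $\varepsilon_0=\varepsilon_0(d,h)>0$ with: no $H_m$, $m>2k$, is $(\varepsilon_0,k)$-hyperfinite. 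Taking $H_m$ as the big component together with $m$ isolated vertices yields $G_m\in\Pi$ on $2m$ vertices; if $\Pi$ were $\rho$-hyperfinite then $G_m$ would be $(\varepsilon_0,\rho(\varepsilon_0))$-hyperfinite, and since all edge deletions must fall inside $H_m$, the graph $H_m$ would be $(2\varepsilon_0,\rho(\varepsilon_0))$-hyperfinite — false for $m$ large — so $\Pi$ is not hyperfinite. For distance to connectivity, any connected graph on the vertex set of some $G\in\Pi$ must give each of the $\lfloor n/2\rfloor$ vertices isolated in $G$ an incident edge; these $\ge\lfloor n/2\rfloor/2\ge n/4-1$ edges are all missing from $G$, so $G$ is $\Omega(n)$-far from every connected graph.

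The core is a structural lemma: there is a constant $c_1>0$ and a choice $s_{\max}=\Theta(1/\varepsilon)$ such that every $d$-bounded $G$ on $n$ vertices that is $\varepsilon$-far from $\Pi$ either (i) has a number $i(G)$ of isolated vertices with $|i(G)-n/2|>c_1\varepsilon n$, or (ii) has $\Omega(\varepsilon n)$ vertices in connected components of size in $[2,s_{\max}]$. I would prove this by contraposition via explicit surgery. Suppose $|i(G)-n/2|\le c_1\varepsilon n$ and fewer than a small constant times $\varepsilon n$ vertices lie in non-isolated components of size at most $s_{\max}$. Then the number of non-isolated components is $O(\varepsilon n)$ with a controllable constant — the small ones are few because each contributes at least two vertices, the large ones because each has more than $s_{\max}$ vertices. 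Merge all non-isolated components into one by $O(1)$ edits per component (adding an edge between two components, first deleting one edge to free a low-degree endpoint if a component is $d$-regular), then correct the size of the resulting component to exactly $\lceil n/2\rceil$: if it is too large, repeatedly delete all edges at a current spanning-tree leaf ($\le d$ deletions each, connectivity preserved); if it is too small, pull the missing $O(c_1\varepsilon n)$ isolated vertices in along a short path attached to one of its vertices. Choosing $c_1$ small and $s_{\max}$ large makes the total number of edits at most $\varepsilon dn$, contradicting $\varepsilon$-farness.

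The tester, and with it testability, follows: (A) take $\Theta(1/\varepsilon^2)$ uniform vertices and, with one adjacency-list query each, estimate the isolated-vertex fraction, rejecting unless it is within $\tfrac{c_1}{2}\varepsilon$ of $\tfrac12$; (B) take $\Theta(1/\varepsilon)$ uniform vertices and from each run a BFS that aborts once more than $s_{\max}$ vertices have been seen — $O(d\,s_{\max})=O(d/\varepsilon)$ queries each — rejecting if any sampled vertex turns out to lie in a non-isolated component of size $\le s_{\max}$. The query count is $O(1/\varepsilon^2)+\Theta(1/\varepsilon)\cdot O(d/\varepsilon)=O(d/\varepsilon^2)$ (for $n=O(1/\varepsilon)$ one simply reads the whole graph, still within this budget). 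Completeness: a graph in $\Pi$ has isolated-vertex fraction $\lfloor n/2\rfloor/n$, within $\tfrac{c_1}{2}\varepsilon$ of $\tfrac12$ for $n$ large, and its unique non-isolated component has size $\lceil n/2\rceil>s_{\max}$, so (B) never rejects while (A) errs with small probability by a Chernoff bound. Soundness: if $G$ is $\varepsilon$-far, the structural lemma gives case (i), caught by (A), or case (ii), in which $\Omega(\varepsilon n)$ vertices witness a small non-isolated component that (B) hits with constant probability; standard repetition pushes the error below $1/3$ on both sides.

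The step I expect to be the real obstacle is the structural lemma, specifically the surgery turning an arbitrary non-far graph into a member of $\Pi$: one must simultaneously keep the edited graph connected on the correct vertex set, never exceed degree $d$ (the irritating case being components that are already $d$-regular, where an edge must be removed first, possibly splitting the component and creating more merging work), and keep the total number of edits comfortably below $\varepsilon dn$. Everything else — the expander argument, the distance bound, and the sampling analysis — is routine once the constants in the lemma have been fixed.
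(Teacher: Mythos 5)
Your proposal is correct and follows essentially the same route as the paper: the identical property (one component of size $\lceil n/2\rceil$ plus isolated vertices), the same two-stage tester (estimate the isolated-vertex fraction, then bounded BFS from $\Theta(1/\varepsilon)$ samples), and the same structural dichotomy proved by the same surgery argument (merging non-isolated components with $O(1)$ edits each, handling $d$-regular components by first deleting a cycle edge, and adjusting the big component's size via spanning-tree leaves or attaching isolated vertices). The only cosmetic difference is that you count vertices in small non-isolated components where the paper counts such components, which changes nothing in the sampling analysis.
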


Motivated by the above result we also obtain a theorem (\cref{thm:partitionoing}) that shows that we can partition the set of vertices of every bounded
degree graph into a constant number of subsets and a separator set, such that the separator set is small and the distribution of \kdisks\ on every subset of
a partition class, is roughly the same as that of the partition class if the subset has small expansion.

\subsection{Our Techniques}

It is well known that constant-time property testing in the bounded-degree graph model is closely connected to the distribution of \kdisk\
isomorphism types (see, for example, \cite{BSS10:minor,NS13:hyperfinite}). 
The \kdisk\ of $v \in V$ is the rooted subgraph that is induced by all vertices at distance at most $k$ from $v$ and has root $v$, i.e. the local
subgraph that can be explored by running a BFS upto depth $k$. Thus, the distribution of \kdisk\ isomorphism types describes the local structure 
of the graph. We then show (in Theorem~\ref{thm:canonical_tester}) that every constant-query property tester can be turned into a canonical tester that is based on approximating the \kdisk\ 
distribution and decides based on a net over the space of all distribution vectors. Technically, our proof for this result mostly follows an earlier construction of
canonical testers introduced in \cite{GR11:proximity} (see also \cite{CPS16:testing,MMPS17}). 

We then exploit a result by Alon \cite[Proposition 19.10]{Lov12:large} that is derived from open questions in graph limits theory.
Alon proved that for every bounded-degree graph $G$, there exists a graph of constant size $H$ whose \kdisk\ distribution can be made arbitrarily 
close (in terms of $\ell_1$ norm distance) to the \kdisk\ distribution of $G$. Given a graph $G$ on $n$ vertices from some constant-query testable property $\Pi$ we can use multiple copies of
$H$ to obtain a graph that consists of connected components of constant size and whose distribution of \kdisks\ is close to that of $G$. 
The latter implies that a canonical tester will behave similarly on $H$ and $G$ and thus accepts with probability at least $2/3$. Although $H$ does not 
necessarily have the tested property, it must be close to it. This implies that there exists a graph $H'$ in $\Pi$ from which we can remove
$\varepsilon d n$ edges to partition it into small connected components. Thus, $H'$ is $(\varepsilon,O_\varepsilon(1))$-hyperfinite, where $O_\varepsilon(1)$ is a constant depending on $\varepsilon$. 
However, $H'$ may not be $(\varepsilon', O_{\varepsilon'}(1))$-hyperfinite for $\varepsilon' < \varepsilon$. The challenge is how to construct such a graph.

In order to do so, we proceed as follows. For every suitable choice of $n$, we construct a series of $n$-vertex graphs $H_i$ such that each $H_i$ approximately inherits the $(\varepsilon, O_\varepsilon(1))$-hyperfinite properties of all graphs $H_{i'}$ for all $i' < i$. The key idea is to maintain the hyperfinite properties of $H_i$ by causing only a small perturbation of its \kdisk\ vector. Carefully choosing the parameters of this process, at the end we obtain a graph $H^{(n)}$ that is $\rho(\varepsilon)$-hyperfinite for a monotone function $\rho(\cdot)$ and every $\varepsilon > 0$.

In order to show that we cannot obtain a similar result for expander graphs in non-hyperfinite properties, we have designed the aforementioned
property of graphs which consist of a connected component on half of the vertices and all other vertices are isolated. Our proof of testability
combines earlier ideas of testing connectivity with simple sampling based estimation of the number of isolated vertices.

\subsection{Other Related Work}%
Goldreich and Ron~\cite{GR11:proximity} gave characterizations of the graph properties that have constant-query proximity-oblivious testers for bounded-degree graphs and for dense graphs. As noted in~\cite{GR11:proximity}, such a class of properties is a rather restricted subset of the class of all constant-query testable properties. Hyperfiniteness is also closely related to \emph{graphings} that have been investigated in the theory of graph limits~\cite{Ele07:note,Sch08:hyperfinite,Lov12:large}. 

\section{Preliminaries}
Let $G=(V,E)$ be a graph with maximum degree bounded by $d$, which is assumed to be a constant. We also call $G$ a $d$-bounded graph.

\begin{Definition}
	A \emph{graph property} $\Pi$ is a set of graphs that is invariant under graph isomorphism. If all the graphs in $\Pi$ have maximum degree upper bounded by $d$, then we call $\Pi$ a $d$-bounded graph property. We let $\npi \subseteq \Pi$ denote the set of graphs in $\Pi$ with $n$ vertices. Note that $\Pi=\cup_{n\geq 1}\npi$. 
	
	Let $\picomp$ denote the complement of $\Pi$, i.e., $\picomp=\mathcal{U}\setminus \Pi$, where $\mathcal{U}$ denotes the set of all $d$-bounded graphs. Let $\npicomp$ denote the set of $n$-vertex graphs that are not in $\npi$, i.e., $\npicomp=\mathcal{U}_n\setminus \Pi_n$, where $\mathcal{U}_n$ denotes the set of all $d$-bounded $n$-vertex graphs. 
	
	A subset $\Pi'\subseteq \Pi$ is called a \emph{subproperty} of $\Pi$ if $\Pi'$ is  invariant under graph isomorphism.
\end{Definition}

We have the following definition on graphs that are far from having some property.
\begin{Definition}
	Let $\Pi=\cup_{n\geq 1}\npi$ be a $d$-bounded graph property. An $n$-vertex graph is said to be $\varepsilon$-far from having property $\npi$ if one has to modify more than $\varepsilon dn$ edges to make it satisfy $\npi$.
	
	Let $\npiepsfar$ denote the set of all $n$-vertex graphs that are $\varepsilon$-far from $\npi$. Let $\piepsfar \subseteq \picomp$ be the set of all graphs that are $\varepsilon$-far from $\Pi$, i.e., $\piepsfar=\cup_{n \geq 1}\npiepsfar$.
\end{Definition}

Given a property $\Pi=\cup_{n \geq 1} \npi$, an algorithm is called a \emph{tester} for $\Pi$, if it takes as input parameters $0<\varepsilon\leq 1, n, d$, and has query access to the adjacency lists of an $n$-vertex $d$-bounded graph $G$, and with probability at least $2/3$, accepts $G$ if $G \in \npi$ and rejects $G$ if $G \in \npiepsfar$. The following gives the definition of constant-query testable properties.

\begin{Definition}
	We call a $d$-bounded graph property $\Pi=\cup_{n\geq 1}\npi$ \emph{(constant-query) testable}, if there exists a tester for $\Pi$ that makes at most $\queries{\Pi} = \queries{\Pi}(\varepsilon, d)$ queries for some function $\queries{\Pi}(\cdot, \cdot)$ that depends only on $\varepsilon, d$.
\end{Definition}

\paragraph{$k$-Discs and frequency vectors.}%
The notions of \kdisks\ and frequency vectors play an important role for analyzing constant-query testable properties. For any vertex $v\in V$, we let $\fkdisk{G}{v}$ denote the subgraph rooted at $v$ that is induced by all vertices that are at distance at most~$k$ from~$v$. For any two rooted subgraphs $H_1,H_2$, we say $H_1$ is isomorphic to $H_2$, denoted by $H_1\simeq H_2$, if there exists a root-preserving mapping $\Phi:V(H_1)\rightarrow V(H_2)$ such that $(u,v)\in E(H_1)$ if and only if $(\Phi(u),\Phi(v))\in E(H_2)$. Note that for constant $d$, the total number of possible non-isomorphic \kdisks\ is also a constant, denoted by $\numdkdisks$. Furthermore, we let $\kdiskset=\{\Delta_1,\cdots,\Delta_\numdiskssymb\}$ be the set of all isomorphism types of \kdisks\ in any $d$-bounded graph, where $\numdiskssymb=\numdkdisks$. Finally, we let $\kdiskfreq{G}$ denote the \emph{frequency vector} of $G$ which is indexed by \kdisk\ types in $\kdiskset$ such that 
$$\kdiskfreqent{G}{\Delta}=\frac{|\{v\in V: \fkdisk{G}{v}\simeq \Delta \}|}{n}$$ 
for any $\Delta\in\kdiskset$, i.e., $\kdiskfreqent{G}{\Delta}$ denotes the fraction of vertices in $G$ whose \kdisks\ are isomorphic to $\Delta$. Furthermore, for any subset $S$ of $G$, we let $\kdiskfreqsub{G}{S}$ denote the vector that is indexed by types in $\kdiskset$ such that 
$$\kdiskfreqsubent{G}{S}{\Delta}=\frac{|\{v\in S: \fkdisk{G}{v}\simeq \Delta \}|}{|S|}$$ 
for any $\Delta\in\kdiskset$, i.e., $\kdiskfreqsubent{G}{S}{\Delta}$ denotes the fraction of vertices in $S$ whose \kdisks\ in $G$ are isomorphic to $\Delta$. Note that $\kdiskfreq{G} = \kdiskfreqsub{G}{V}$. If $S$ contains a single element $x$, we write $\kdiskfreqsub{G}{x} = \kdiskfreqsub{G}{S}$.

For any vector $f$, we let $\norm{f}$ denote its $\ell_1$-norm. We have the following simple lemma on the $\ell_1$-norm distance of the frequency vectors of two graphs that are $\varepsilon$-close to each other. The proof follows from the proof of Corollary 3 in~\cite{FPS15:constant}, while we provide a proof here for the sake of completeness.
\begin{lemma}
	\label{thm:close_freq}
	Let $\varepsilon > 0$ and $k\geq 1$. Let $G_1, G_2$ be $d$-bounded graphs such that $G_1$ is $\varepsilon$-close to $G_2$. Then, $\norm{\kdiskfreq{G} - \kdiskfreq{G_2}} < 6\varepsilon d^{k+1}$. %
\end{lemma}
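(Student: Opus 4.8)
The plan is to bound how much the frequency vector can change when we modify a single edge, and then multiply by the number of modified edges. Suppose $G_1$ is $\varepsilon$-close to $G_2$, so there is a sequence of at most $\varepsilon d n$ edge insertions/deletions transforming $G_1$ into $G_2$; write $G_1 = F_0, F_1, \dots, F_m = G_2$ with $m \le \varepsilon d n$, where each $F_{i+1}$ differs from $F_i$ by exactly one edge. By the triangle inequality for the $\ell_1$-norm, it suffices to show $\norm{\kdiskfreq{F_i} - \kdiskfreq{F_{i+1}}} \le c \cdot d^{k+1}/n$ for a small absolute constant $c$ (e.g.\ $c = 6$ would be safe, and even smaller suffices), and then sum over the $m \le \varepsilon d n$ steps.

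The key step is the single-edge bound. Inserting or deleting an edge $\{u,v\}$ can only change the \kdisk{} isomorphism type of a vertex $w$ whose distance to $u$ or to $v$ (in either $F_i$ or $F_{i+1}$) is at most $k$ — otherwise the BFS exploration to depth $k$ from $w$ never sees the edge $\{u,v\}$, so $\fkdisk{F_i}{w} \simeq \fkdisk{F_{i+1}}{w}$. The number of such vertices $w$ is at most the number of vertices within distance $k$ of $u$ plus the number within distance $k$ of $v$, in both graphs; since every graph involved is $d$-bounded, a ball of radius $k$ around a vertex contains at most $1 + d + d(d-1) + \dots \le d^{k+1}$ vertices (one can use the cruder bound $d^{k+1}$ or the standard $(d+1)^k$). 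Hence at most $2 d^{k+1}$ vertices change their \kdisk{} type. Each such vertex contributes at most $1/n$ to at most two coordinates of the difference vector $\kdiskfreq{F_i} - \kdiskfreq{F_{i+1}}$ (it leaves one bucket and enters another), so $\norm{\kdiskfreq{F_i} - \kdiskfreq{F_{i+1}}} \le 2 \cdot 2 d^{k+1} / n = 4 d^{k+1}/n$.

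Combining, $\norm{\kdiskfreq{G_1} - \kdiskfreq{G_2}} \le m \cdot 4 d^{k+1}/n \le \varepsilon d n \cdot 4 d^{k+1}/n = 4\varepsilon d^{k+2}$, which is a bit weaker than the stated bound; to recover the clean $6\varepsilon d^{k+1}$ one should instead count more carefully — note that a ball of radius $k$ in a $d$-bounded graph has at most $1 + d\sum_{j=0}^{k-1}(d-1)^j \le d^k + 1$ vertices for $d \ge 2$ (and the degree bound on one endpoint is really $d$ but its neighbors branch with degree $d-1$), and an edge modification at $\{u,v\}$ affects only the two radius-$k$ balls around $u$ and $v$; tracking the factors gives the claimed constant. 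The main (very mild) obstacle is purely bookkeeping: getting the ball-size estimate tight enough and accounting correctly for the fact that the affected set must be measured in \emph{both} $F_i$ and $F_{i+1}$ (an inserted edge can bring distant vertices into a ball), so one takes the union of the radius-$k$ balls around $u$ and $v$ across both graphs; this at most doubles the count and is absorbed by the constant $6$. Since $n$ is the common number of vertices (both graphs have $n$ vertices, as edge modifications preserve the vertex set), no further normalization issue arises.
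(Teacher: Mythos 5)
Your overall strategy is the same as the paper's: only vertices within distance $k$ of an endpoint of a modified edge can change their \kdisk\ type, and each such vertex moves at most $2/n$ of $\ell_1$ mass between coordinates of the frequency vector; the only difference is that you telescope over single-edge edits while the paper argues in one shot on the symmetric difference $E(G_1)\triangle E(G_2)$, which changes nothing substantive. However, as written your argument does not actually reach the stated bound $6\varepsilon d^{k+1}$: the calculation you carry out in full gives $4\varepsilon d^{k+2}$, and the refinement you sketch --- counting the radius-$k$ balls around $u$ and $v$ \emph{in both graphs} and saying this ``at most doubles the count'' --- gives roughly $2\cdot 2\cdot(d^k+1)\cdot\tfrac{2}{n}$ per edge, i.e.\ about $8\varepsilon d^{k+1}$ in total, still above the claimed constant. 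The missing observation is that no doubling over the two graphs is needed: if both $u$ and $v$ are at distance more than $k$ from $w$ in the graph that \emph{contains} the edge $\{u,v\}$ (the larger of $F_i,F_{i+1}$), then they are at distance more than $k$ in the smaller graph as well, every path of length at most $k$ from $w$ avoids $u$ and $v$ and hence the modified edge, and so $\fkdisk{F_i}{w}$ and $\fkdisk{F_{i+1}}{w}$ coincide. Thus per modified edge at most $2\bigl(1+d+d(d-1)+\cdots+d(d-1)^{k-1}\bigr)\le 3d^k$ vertices are affected (using $1+d\sum_{j=0}^{k-1}(d-1)^j\le\tfrac{3}{2}d^k$ for $d\ge 2$), each contributing at most $2/n$, giving $6d^k/n$ per edit and, after summing over the at most $\varepsilon d n$ edits, the bound $6\varepsilon d^{k+1}$ --- which is exactly the paper's count.

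A further small point your telescoping needs: the ball-size bound requires each intermediate graph $F_i$ to be $d$-bounded, which is not automatic for an arbitrary ordering of the edits. Perform all deletions before all insertions, so that every $F_i$ is a subgraph of $G_1$ or of $G_2$ and hence $d$-bounded; alternatively, avoid intermediate graphs altogether by counting, as the paper does, the vertices within distance $k$ of an endpoint of some edge of $E(G_1)\triangle E(G_2)$ directly in $G_1$ and $G_2$.
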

\begin{proof}
	Let $F:=E(G_1)\triangle E(G_2)$ denote the set of edges that appear only in one of the two graphs $G_1,G_2$. Since $G_1$ is $\varepsilon$-close to $G_2$, it holds that $|F|\leq \varepsilon dn$. Note that for any $e\in F$, the total number of vertices that are within distance at most $k$ to either of its endpoint is at most $2(1+d+d(d-1)+\cdots+d(d-1)^{k-1})\leq 3d^k$. This further implies that the total number of vertices that may have different \kdisk\ types in $G_1$ and $G_2$ is at most $|F|\cdot 3d^k \leq 3\varepsilon d^{k+1} n$. Finally, we note that each vertex with different \kdisk\ types in $G_1$ and $G_2$ contributes at most $\frac{2}{n}$ to the $\ell_1$-norm distance of $\kdiskfreq{G_1}$ and $\kdiskfreq{G_2}$, which implies that 
	$$\norm{\kdiskfreq{G_1} - \kdiskfreq{G_2}} < 3\varepsilon d^{k+1} n\cdot \frac{2}{n}=6\varepsilon d^{k+1}.$$
	This completes the proof of the lemma.
\end{proof}

The converse to the above lemma is not true in general, that is, it is not true that the closeness of the frequency vectors of two graphs implies the closeness of these two graphs. However, Benjamini et al.~\cite{BSS10:minor} showed that the converse somehow still holds for hyperfinite graphs. More precisely, they proved the following result.  

\begin{lemma}[Theorem 2.2 from~\cite{BSS10:minor}]\label{lemma:benj}
	Let $d,s\geq 1$ and $\varepsilon>0$. Let $\Lambda_1$ be the set of $(\varepsilon,s)$-hyperfinite $d$-bounded graphs, and let $\Lambda_2$ be the set of $d$-bounded graphs that are not $(4\varepsilon\log(4d/\varepsilon), s)$-hyperfinite. Then it holds that for any graph $G_1 \in \Lambda_1$ and graph $G_2 \in \Lambda_2$, 
	$$\norm{\kdiskfreq{G_1} - \kdiskfreq{G_2}}>\frac{8\varepsilon}{d}\log(4/3),$$
	where %
	$k = 10sd^{2s+1} / \varepsilon$.
\end{lemma}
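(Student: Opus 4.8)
The plan is to prove this as an inverse counting lemma and argue the contrapositive: assuming $\norm{\kdiskfreq{G_1}-\kdiskfreq{G_2}}\le\tfrac{8\varepsilon}{d}\log(4/3)$, exhibit a small edge set whose removal breaks $G_2$ into components of size at most $s$, contradicting $G_2\in\Lambda_2$. The starting point is a locality observation about $G_1$. Fix $F_1\subseteq E(G_1)$ with $|F_1|\le\varepsilon d\,|V(G_1)|$ whose removal leaves components of size at most $s$. For a vertex $v$, its component $C_1(v)$ lies within distance $s-1$ of $v$ and the edges of $F_1$ on its boundary have both endpoints within distance $s$ of $v$; hence, whenever $k\ge s$, the rooted structure ``$C_1(v)$ together with its $F_1$-boundary'' is a function of the isomorphism type of $\fkdisk{G_1}{v}$ alone. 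More generally, whether $v$ lies in a component $C$ with $|\partial_{F_1}C|\le t\cdot|C|$ --- call such a $v$ \emph{$t$-light} --- is also determined by a bounded-radius neighbourhood, and since $\sum_C|\partial_{F_1}C|\le 2|F_1|$, at least a $1-\tfrac{2\varepsilon d}{t}$ fraction of $V(G_1)$ is $t$-light.

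The core task is to transfer a decomposition of this form to $G_2$, and the reason it is not routine is that doing it in one shot --- cut the whole boundary of each prescribed component and charge it to all of its (up to $s$) vertices --- loses a spurious factor of $s$, which is too weak. Instead I iterate over $R=\Theta(\log(4d/\varepsilon))$ rounds, maintaining a set $U_j$ of still-unresolved vertices of $G_2$ with $|U_j|=O(4^{-j}n_2)$ (where $n_2:=|V(G_2)|$) on which $G_2$ behaves, locally, like a $(\Theta(4^j\varepsilon),s)$-hyperfinite graph. In round $j$ I use the $t_j$-light structure of the relevant part of $G_1$ at threshold $t_j=\Theta(4^j\varepsilon d)$: each $t_j$-light vertex of $U_j$ whose \kdisk\ is realised in $G_1$ is assigned the constant-size component prescribed by its type, all boundary edges of these prescribed components are added to the cut set, and vertices whose \kdisk\ is not realised in $G_1$ (or which fail to be light while the corresponding $G_1$-vertices are) are carried over to $U_{j+1}$. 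Since $G_1$ is genuinely partitioned into size-$\le s$ pieces that are visible inside a \kdisk, removing these edges leaves components of size at most $s$ on the resolved part; the number of edges cut in round $j$ is at most $t_j|U_j|=O(\varepsilon d n_2)$, plus $O(d)$ for each of the at most $\norm{\kdiskfreq{G_1}-\kdiskfreq{G_2}}\cdot n_2$ error vertices; and a constant fraction of $U_j$ is resolved. After $R$ rounds at most $\tfrac{\varepsilon}{d}n_2$ vertices (plus an accumulated error) remain, and isolating each of them costs at most $d$ further edges. Summing the $R$ per-round costs, each $O(\varepsilon d n_2)$, and using $\norm{\kdiskfreq{G_1}-\kdiskfreq{G_2}}\le\tfrac{8\varepsilon}{d}\log(4/3)$ to absorb the error terms gives a cut of size at most $4\varepsilon\log(4d/\varepsilon)\,d n_2$, so $G_2\notin\Lambda_2$.

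The main obstacle I anticipate is making this transfer \emph{robust}, and it has two parts. First, one must show that the prescribed components really isolate their vertices in $G_2$ and that overlapping prescriptions are mutually consistent, so that the cut set is a union of \emph{distinct} component boundaries rather than being inflated by a factor $s$; this needs realisability of a \kdisk\ to propagate to the \kdisks\ of all vertices in the radius-$\le s$ windows used within a round, and carrying this through the $\Theta(\log(4d/\varepsilon))$ rounds (whose thresholds grow up to $\Theta(d^2)$) is precisely what forces the large radius $k=10sd^{2s+1}/\varepsilon$. Second, one must justify that $U_j$ genuinely looks, locally, like a $(\Theta(4^j\varepsilon),s)$-hyperfinite graph --- here the point is that ``repeatedly landing in a heavy component'' is a local, type-determined condition, so the local statistics of $U_j$ inherit closeness from $\norm{\kdiskfreq{G_1}-\kdiskfreq{G_2}}$ --- and keep the frequency error accumulated over all $R$ rounds below the edge budget, which is where the precise constant $\tfrac{8\varepsilon}{d}\log(4/3)$ in the hypothesis is consumed.
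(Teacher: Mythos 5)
First, note that the paper does not prove this statement at all: it is imported verbatim as Theorem~2.2 of~\cite{BSS10:minor}, so there is no in-paper proof to compare against, and your sketch has to stand on its own as a reconstruction of the Benjamini--Schramm--Shapira argument.

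As it stands it does not: the foundational locality claim is false. You fix a global certificate $F_1\subseteq E(G_1)$ of $(\varepsilon,s)$-hyperfiniteness and then assert that the component $C_1(v)$ of $v$ in $G_1-F_1$ ``together with its $F_1$-boundary'' is a function of the isomorphism type of $\fkdisk{G_1}{v}$ alone, and likewise that $t$-lightness is type-determined. But $F_1$ is an arbitrary global edge set, not something encoded in the graph: two vertices of $G_1$ with isomorphic \kdisks\ can lie in completely different components of $G_1-F_1$, with different boundaries, or one can be light and the other heavy. Consequently there is no well-defined ``prescribed component'' that a \kdisk\ type carries over to $G_2$, and the entire transfer mechanism --- assigning to each resolved vertex of $U_j$ the component dictated by its type, cutting exactly the union of these boundaries, and charging each cut edge once --- has no starting point; the consistency/overlap issue you flag as the ``main obstacle'' is not a technical refinement of this scheme but a symptom of the scheme being ill-posed. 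What \emph{is} determined by the \kdisk\ type is an existential, intrinsic witness: ``the $s$-disc of $v$ contains some set $S\ni v$ with $\lvert S\rvert\le s$ and small edge-boundary \emph{in $G$ itself},'' and hyperfiniteness of $G_1$ only yields (by averaging $\sum_C\lvert\partial C\rvert\le 2\lvert F_1\rvert$) that many vertices admit such witnesses at suitable thresholds. A correct proof, as in \cite{BSS10:minor}, transfers the prevalence of these witnesses to $G_2$ via the closeness of frequency vectors and then runs an iterative removal procedure on $G_2$ with degrading thresholds (this is where the $\log(4d/\varepsilon)$ factor and the $\log(4/3)$ slack arise) and an averaging over scales to control overlaps and avoid the factor-$s$ charging loss (this is where the large radius $k=10sd^{2s+1}/\varepsilon$ is consumed). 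Your round structure and budget bookkeeping are in the right spirit, but without replacing the type-determined ``prescribed components'' by such existential witnesses and proving that greedily cutting witnessed sets in $G_2$ works, the argument has a genuine gap.
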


\paragraph{Frequency preservers and blow-up graphs.} The following lemma is due to Alon, and it roughly says that for any $n$-vertex $d$-bounded graph, there always exists a ``small'' graph whose size is independent of $n$ that preserves the local structure well, i.e., its \kdisk\ frequencies.%

\begin{lemma}[Proposition 19.10 in \cite{Lov12:large}]
	\label{thm:alon}
	For any $\delta>0$ and $d,k\geq 1$, there exists a function $\ddkalonsize$ such that for every $n$-vertex graph $G$, there exists a graph $H$ of size at most $\ddkalonsize$ such that $\norm{\kdiskfreq{G} - \kdiskfreq{H}} < \delta$.
\end{lemma}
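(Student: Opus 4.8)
The plan is to prove this via a compactness argument in the finite‑dimensional space of frequency vectors, exploiting that the \kdisk\ frequency vectors of all $d$-bounded graphs lie in a bounded region of $\R^{\numdkdisks}$.

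First I would fix $d$ and $k$ and write $\numdiskssymb = \numdkdisks$. Every frequency vector $\kdiskfreq{G}$ of a $d$-bounded graph lies in the simplex $\Sigma \defeq \{x \in \R^{\numdiskssymb} : x \ge 0,\ \norm{x} = 1\}$, since each coordinate is a fraction of the vertices of $G$ and every vertex has exactly one \kdisk\ type. Let $S \defeq \{\kdiskfreq{G} : G \text{ is a finite } d\text{-bounded graph}\}$, so $S \subseteq \Sigma$. Since $S$ is bounded, its closure $\overline{S} \subseteq \R^{\numdiskssymb}$ is compact, and $S$ is dense in $\overline{S}$.

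Next, for the given $\delta > 0$, I would observe that the open $\ell_1$-balls $B(x, \delta)$ with $x \in S$ form an open cover of $\overline{S}$: any $y \in \overline{S}$ lies within $\ell_1$-distance $\delta$ of some $x \in S$ by density, so $y \in B(x, \delta)$. By compactness of $\overline{S}$, finitely many such balls suffice, say $B(x_1, \delta), \dots, B(x_m, \delta)$, and by definition of $S$ each center satisfies $x_i = \kdiskfreq{H_i}$ for some finite $d$-bounded graph $H_i$. Set $\ddkalonsize \defeq \max_{1 \le i \le m} |V(H_i)|$, which is finite and depends only on $d$, $k$, and $\delta$. Then for an arbitrary $n$-vertex $d$-bounded graph $G$, the vector $\kdiskfreq{G}$ lies in $S \subseteq \overline{S}$, hence in some ball $B(x_i, \delta)$, so $\norm{\kdiskfreq{G} - \kdiskfreq{H_i}} < \delta$ with $|V(H_i)| \le \ddkalonsize$; taking $H \defeq H_i$ proves the lemma.

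The step I expect to be the real obstacle is not the logic of this argument, which is routine, but the fact that it is entirely non-constructive: compactness yields a finite subcover but gives no control whatsoever on $m$ or on the sizes $|V(H_i)|$, so this route produces no explicit bound on $\ddkalonsize$. An explicit bound would instead require a direct combinatorial construction that samples the local neighborhoods of $G$ and glues them into a single small $d$-bounded graph while respecting the consistency constraints among overlapping \kdisks, and that gluing step is the genuinely delicate part. Since the lemma only asserts the existence of the function $\ddkalonsize$, however, the compactness proof suffices.
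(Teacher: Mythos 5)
Your compactness argument is correct, and it is essentially the standard proof of this result: the paper itself gives no proof of \cref{thm:alon}, citing it as Proposition 19.10 of Lov\'asz's book (attributed to Alon), where the argument is likewise a non-constructive compactness argument in the finite-dimensional, hence precompact, space of $k$-disc frequency vectors of $d$-bounded graphs. Your closing remark about the lack of any explicit bound on $\ddkalonsize$ also matches the paper, which states exactly this caveat after the lemma.
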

\begin{Definition}[\dkdfp]
	We call the small graph $H$ obtained from \cref{thm:alon} a \emph{$(\delta,k)$-disk frequency preserver} (abbreviated as \emph{\dkdfp}) of $G$.
\end{Definition}
We remark that though we know the existence of the function $\ddkalonsize$ that upper bounds the size of some \dkdfp, there is no known explicit bound on $\ddkalonsize$ for arbitrary $d$-bound graphs (see~\cite{FPS15:constant} for explicit bounds of $\ddkalonsize$ for some special classes of graphs).

We use DFPs as a building block to construct $n$-vertex graphs that have constant-size connected components and approximately preserve the \kdisk\ frequencies of a given $n$-vertex graph $G$. More precisely, we have the following definition.
\begin{Definition}[Blow-Up Graph]
	\label{def:blowup_graph}
	Let $\delta, k > 0$, and let $G$ be a $d$-bounded $n$-vertex graph. Let $H$ be a \dkdfp\ of $G$ graph of size $h \leq \dalonsize{\delta}{k}$. %
	Let $H'$ be the $n$-vertex graph that is composed of $\lfloor n / h \rfloor$ disjoint copies of $H$ and $n - h \cdot \lfloor n / h  \rfloor$ isolated vertices. We call $H'$ the \bug{\delta}{k} of~$G$.
\end{Definition}

The following lemma follows directly from the above definition of blow-up graphs and the fact that the blow-up graph contains at most $h\leq \dalonsize{\delta}{k}$ isolated vertices.
\begin{lemma}
	\label{thm:blowup_locality}
	Let $\delta,d, k > 0$. Let $n\geq n_0(\delta,d,k):=20\dalonsize{\delta}{k}/\delta$. Let $G$ be any $d$-bounded $n$-vertex graph and let $H$ be the \bug{\delta}{k} of $G$. We have $\norm{\kdiskfreq{G} - \kdiskfreq{H}} < 1.1\delta$.
\end{lemma}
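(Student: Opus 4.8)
The plan is to pass through the \dkdfp\ that underlies the construction of $H$ and apply the triangle inequality. First I would fix the \dkdfp\ $F$ of $G$ from \cref{def:blowup_graph}: it has $h \le \dalonsize{\delta}{k}$ vertices and, by \cref{thm:alon}, satisfies $\norm{\kdiskfreq{G} - \kdiskfreq{F}} < \delta$. By \cref{def:blowup_graph}, $H$ is the disjoint union of $m := \lfloor n/h\rfloor$ copies of $F$ and $r := n - hm$ isolated vertices, where $0 \le r < h$.

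The key step is the observation that taking a disjoint union changes no \kdisk. If $v$ lies in one of the copies of $F$ inside $H$, then every vertex within distance $k$ of $v$ lies in that same copy, so $\fkdisk{H}{v} \simeq \fkdisk{F}{v'}$ for the corresponding vertex $v'$ of $F$; and each of the $r$ isolated vertices of $H$ has the \kdisk\ type $\Delta_0 \in \kdiskset$ that consists of a single (isolated) root. Writing $e_{\Delta_0}$ for the indicator vector of $\Delta_0$, it follows that the frequency vector of $H$ is the convex combination
\[
\kdiskfreq{H} \;=\; \frac{hm}{n}\,\kdiskfreq{F} \;+\; \frac{r}{n}\,e_{\Delta_0}\fstop
\]
Since $\kdiskfreq{F}$ and $e_{\Delta_0}$ are probability vectors, $\norm{e_{\Delta_0} - \kdiskfreq{F}} \le 2$, and I would then estimate, using $h \le \dalonsize{\delta}{k}$ and $n \ge n_0(\delta,d,k) = 20\dalonsize{\delta}{k}/\delta$,
\[
\norm{\kdiskfreq{H} - \kdiskfreq{F}} \;=\; \frac{r}{n}\,\norm{e_{\Delta_0} - \kdiskfreq{F}} \;\le\; \frac{2r}{n} \;<\; \frac{2h}{n} \;\le\; \frac{2\dalonsize{\delta}{k}}{n} \;\le\; \frac{\delta}{10}\fstop
\]

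Finally, combining the two bounds by the triangle inequality yields
\[
\norm{\kdiskfreq{G} - \kdiskfreq{H}} \;\le\; \norm{\kdiskfreq{G} - \kdiskfreq{F}} + \norm{\kdiskfreq{F} - \kdiskfreq{H}} \;<\; \delta + \frac{\delta}{10} \;<\; 1.1\delta\fstop
\]
I do not expect a genuine obstacle; the only point that deserves an explicit line is the observation above, namely that a vertex's radius-$k$ neighbourhood is unaffected by the presence of other connected components, so that the non-isolated part of $H$ reproduces $\kdiskfreq{F}$ exactly. This is immediate because vertices in distinct components are at infinite distance from one another.
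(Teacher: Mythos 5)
Your proof is correct and is exactly the argument the paper has in mind (the paper states the lemma with only the remark that it ``follows directly'' from the definition of the blow-up graph and the bound on the number of isolated vertices): the copies of the \dkdfp\ reproduce its \kdisk\ frequencies exactly, the at most $h-1$ leftover isolated vertices shift the frequency vector by at most $2h/n\leq\delta/10$ in $\ell_1$, and the triangle inequality with \cref{thm:alon} finishes it. One cosmetic slip: $\delta+\delta/10=1.1\delta$ rather than $<1.1\delta$, but the overall strict bound still holds because Alon's inequality $\norm{\kdiskfreq{G}-\kdiskfreq{F}}<\delta$ is strict.
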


\paragraph{Expansion and expander graphs.} Let $G=(V,E)$ be a $d$-bounded graph. Let $S\subset V$ be a subset such that $|S|\leq |V|/2$. The \emph{expansion} or \emph{conductance} of set $S$ is defined to be $\phi_G(S)=\frac{e(S,V\setminus S)}{d|S|}$, where $e(S,V\setminus S)$ denotes the number of crossing edges from $S$ to $V\setminus S$. The expansion of $G$ is defined as $\phi(G):=\min_{S:|S|\leq |V|/2}\phi_G(S)$. We call $G$ a \emph{$\phi$-expander} if $\phi(G)\geq \phi$. We simply call $G$ an expander if $G$ is a $\phi$-expander for some universal constant $\phi$.

\section{Constant-Query Testable Properties and Hyperfinite Properties}

In this section, we give the proof of main theorem, i.e., Theorem~\ref{thm:main_subproperty}. We first give the necessary tools in Section~\ref{sec:basic_tools}, and then give the proof of the first part and second part of Theorem~\ref{thm:main_subproperty} in Section~\ref{sec:testable_property} and \ref{sec:complement}, respectively.  %

\subsection{Basic Tools}\label{sec:basic_tools}
The following is a direct corollary of Lemma~\ref{lemma:benj} by Benjamini et al.~\cite{BSS10:minor}.
\begin{lemma}
	\label{thm:local_encondig_hypf}
	Let $\varepsilon, s > 0$. 
	Let $\Pi$ be a testable graph property. Suppose there exists a graph $G \in \Pi_n$ that is $(\varepsilon, s)$-hyperfinite. %
	Then, every graph $G' \in \Pi_n$ such that $\norm{\kdiskfreq{G} - \kdiskfreq{G'}} < \frac{8 \varepsilon}{d} \log(4/3)$ is $(4 \varepsilon \log \frac{4d}{\varepsilon}, s)$-hyperfinite,
	where $k = 10sd^{2s+1} / \varepsilon$.
\end{lemma}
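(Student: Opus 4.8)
The plan is to derive the statement immediately from Lemma~\ref{lemma:benj} by contraposition. I note first that neither the testability of $\Pi$ nor the fact that $G, G' \in \Pi_n$ is actually needed for this particular claim: these hypotheses are carried along only because this is the form in which the corollary will be invoked later in this section. The substance is the purely combinatorial statement that an $(\varepsilon,s)$-hyperfinite graph cannot have its \kdisk\ frequency vector approximated too well by a graph that fails to be $(4\varepsilon\log(4d/\varepsilon),s)$-hyperfinite.

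Concretely, I would first fix all parameters to agree with Lemma~\ref{lemma:benj}: use the same $\varepsilon$, $s$, $d$ and take $k = 10sd^{2s+1}/\varepsilon$, so that the vectors $\kdiskfreq{G}$ and $\kdiskfreq{G'}$ appearing in the hypothesis are exactly the \kdisk\ frequency vectors to which that lemma refers. Let $\Lambda_1$ denote the set of $(\varepsilon,s)$-hyperfinite $d$-bounded graphs and $\Lambda_2$ the set of $d$-bounded graphs that are \emph{not} $(4\varepsilon\log(4d/\varepsilon),s)$-hyperfinite, exactly as in Lemma~\ref{lemma:benj}. Then I would argue by contradiction: assume $G'$ is not $(4\varepsilon\log\frac{4d}{\varepsilon},s)$-hyperfinite, so $G' \in \Lambda_2$; since $G$ is $(\varepsilon,s)$-hyperfinite we have $G \in \Lambda_1$; hence Lemma~\ref{lemma:benj} yields $\norm{\kdiskfreq{G} - \kdiskfreq{G'}} > \frac{8\varepsilon}{d}\log(4/3)$, which contradicts the assumed bound $\norm{\kdiskfreq{G} - \kdiskfreq{G'}} < \frac{8\varepsilon}{d}\log(4/3)$. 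Therefore $G'$ is $(4\varepsilon\log\frac{4d}{\varepsilon},s)$-hyperfinite, as claimed.

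I do not expect any real obstacle here: the argument is a one-line contrapositive once the parameters are aligned. The only points worth a sentence of care are that the threshold $k$ used to form the disk-frequency vectors must be the exact function of $s,d,\varepsilon$ prescribed by Lemma~\ref{lemma:benj} (otherwise the two statements speak about different vectors), and that one should implicitly restrict attention to the range $4\varepsilon\log(4d/\varepsilon) \le 1$ in which the target hyperfiniteness notion is meaningful; outside that range the conclusion holds trivially, since one may delete all (at most $dn/2$) edges of any $d$-bounded graph and be left with single-vertex components.
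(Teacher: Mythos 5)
Your proof is correct and matches the paper's reasoning: the paper states this lemma as a direct corollary of Lemma~\ref{lemma:benj}, which is exactly your contrapositive application (with the same choice $k = 10sd^{2s+1}/\varepsilon$). Your side remarks — that testability and membership in $\Pi_n$ are not actually used, and that the conclusion is trivial when $4\varepsilon\log(4d/\varepsilon) \ge 1$ — are accurate and harmless.
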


Our second tool is the following characterization of constant-query testable properties by the so-called \emph{canonical tester}. Such a characterization is similar to the previous ones given in~\cite{GR11:proximity,CPS16:testing} for bounded-degree testable graph properties. The main difference here is that our canonical tester makes decisions based on the frequency vectors, instead of the forbidden subgraphs as considered in the previous work. We have the following theorem, whose proof is deferred to \cref{sec:canonical_tester}.

\begin{theorem}[Canonical Tester]
	\label{thm:canonical_tester}
	Let \( \Pi = (\npi)_{n \in \setn} \) be a graph property that can be tested with query complexity \( \pqueries(\varepsilon,d) \). %
Then there exists $t\defeq c\cdot \pqueries(\varepsilon,d)$ for some constant $c>1$, \(  n_1:=n_1(\varepsilon, d)\) such that for any $\varepsilon>0$, $d$, $n\geq n_1$, there exists a tester $\mathcal{T}_C$ that
	\begin{enumerate}
		\item accepts any $d$-bounded $n$-vertex graph $G$ with probability at least $2/3$, if \( \min_{G' \in \npi} \norm{\diskfreq{t}{G} - \diskfreq{t}{G'}} \leq  \frac{1}{12 t} \),\label{ite:canonical_cond_1}
		\item rejects any $d$-bounded $n$-vertex graph $G$ with probability at least $2/3$, if \( \min_{G' \in \npiepsfar} \norm{\diskfreq{t}{G} - \diskfreq{t}{G'}} \leq  \frac{1}{12 t} \).\label{ite:canonical_cond_2}
	\end{enumerate}
	The canonical tester has query complexity $\pcqueries(\varepsilon,d) \leq t\cdot d^{t+2}$. 
\end{theorem}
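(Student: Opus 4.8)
The plan is to start from an arbitrary tester for $\Pi$, boost its success probability close to $1$, rewrite it in a \emph{canonical} form in which it inspects only the $\disks{t}$ of a constant number of uniformly random vertices, and then exploit that the acceptance probability of such a tester is a Lipschitz function of the $\disk{t}$ frequency vector $\diskfreq{t}{G}$. The two claimed conditions then follow by comparing $G$ with the nearby graphs $G'$ whose existence is hypothesised: since the tester sees only local statistics it cannot distinguish $G$ from $G'$, so it behaves on $G$ essentially as on $G'$, which it accepts (resp.\ rejects) with high probability.

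In detail, let $T$ be a tester for $\Pi$ with query complexity $q \defeq \pqueries(\varepsilon, d)$. First I would boost $T$'s success probability: for a suitable absolute constant $c$, running $T$ independently $c$ times and taking a majority vote yields a tester $\tilde T$ of query complexity at most $c\,q$ that accepts every $G \in \npi$ with probability at least $1 - \gamma$ and rejects every $G \in \npiepsfar$ with probability at least $1 - \gamma$, where $\gamma \le \tfrac{1}{5}$ is a small absolute constant (standard amplification). Next I would bring $\tilde T$ into canonical form, following the standard reduction of Goldreich and Ron~\cite{GR11:proximity}: using that $\Pi$ is invariant under relabelling the vertices, one may assume that $\tilde T$ only queries vertices that are either already discovered or are fresh ``seed'' vertices chosen uniformly at random, and that it decides solely from the isomorphism types of the explored rooted subgraphs. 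Since $\tilde T$ makes at most $c\,q$ neighbour queries, it uses at most $c\,q$ seeds and the subgraph it explores around each seed has at most $c\,q$ edges, hence lies inside the $\disk{t}$ of that seed for $t \defeq c\,q$. This produces the canonical tester $\mathcal{T}_C$: it draws $t$ vertices $v_1, \dots, v_t$ independently and uniformly at random (padding to exactly $t$ seeds if $\tilde T$ uses fewer), performs a BFS to depth $t$ from each to learn $\fdisk{t}{G}{v_i}$, and outputs its answer through a fixed function $\tilde g$, where $\tilde g(D_1, \dots, D_t) \in [0,1]$ is the probability that $\mathcal{T}_C$ accepts given that the seeds have $\disk{t}$ isomorphism types $D_1, \dots, D_t$. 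A routine (if crude) bound on the cost of exploring $t$ such discs gives query complexity at most $t \cdot d^{t+2}$, and the acceptance probabilities of $\mathcal{T}_C$ and $\tilde T$ differ by at most $\epsilon_n = O(t^2/n)$ on every $d$-bounded $n$-vertex graph (accounting for the collision probability in the reduction to random seeds). Equivalently, $\mathcal{T}_C$ reads the empirical $\disk{t}$ frequency vector of its $t$ samples --- a point in a net of the probability simplex over $\disk{t}$ types --- and looks up $\tilde g$; this is the sense in which the canonical tester decides by a net over distribution vectors.

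The key point is that, since the $v_i$ are i.i.d.\ uniform, their $\disk{t}$ types $D_1, \dots, D_t$ are i.i.d.\ with distribution $\diskfreq{t}{G}$, so the acceptance probability of $\mathcal{T}_C$ on $G$ equals $\Psi\bigl(\diskfreq{t}{G}\bigr)$, where $\Psi(f) \defeq \mathbb{E}_{D_1, \dots, D_t \sim f}\bigl[\tilde g(D_1, \dots, D_t)\bigr]$ depends only on $f$; moreover $\Psi$ is $\tfrac{t}{2}$-Lipschitz with respect to $\norm{\cdot}$, because coupling $D_1, \dots, D_t \sim f$ with $D_1', \dots, D_t' \sim f'$ maximally and independently in each coordinate makes all coordinates agree except with probability at most $\tfrac{t}{2}\norm{f - f'}$, and $\tilde g \in [0,1]$. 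If $G$ satisfies the hypothesis of item~\ref{ite:canonical_cond_1}, witnessed by $G' \in \npi$ with $\norm{\diskfreq{t}{G} - \diskfreq{t}{G'}} \le \tfrac{1}{12t}$, then $G' \in \npi$ gives $\Psi\bigl(\diskfreq{t}{G'}\bigr) = \Pr[\mathcal{T}_C \text{ accepts } G'] \ge 1 - \gamma - \epsilon_n$, so by Lipschitzness $\Pr[\mathcal{T}_C \text{ accepts } G] = \Psi\bigl(\diskfreq{t}{G}\bigr) \ge 1 - \gamma - \epsilon_n - \tfrac{t}{2}\cdot\tfrac{1}{12t} = 1 - \gamma - \epsilon_n - \tfrac{1}{24}$, which is at least $\tfrac{2}{3}$ whenever $\gamma + \epsilon_n \le \tfrac{7}{24}$, i.e.\ for all $n \ge n_1(\varepsilon, d)$ with $n_1$ large enough that $\epsilon_n = O(t^2/n) \le \tfrac{1}{24}$. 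Symmetrically, if $G$ satisfies the hypothesis of item~\ref{ite:canonical_cond_2}, witnessed by $G' \in \npiepsfar$, then $\Psi\bigl(\diskfreq{t}{G'}\bigr) \le \gamma + \epsilon_n$ and hence $\Pr[\mathcal{T}_C \text{ accepts } G] \le \gamma + \epsilon_n + \tfrac{1}{24} < \tfrac{1}{3}$, so $\mathcal{T}_C$ rejects $G$ with probability at least $\tfrac{2}{3}$.

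I expect the main obstacle to be the canonicalization step: arguing that an arbitrary, adaptive tester can be replaced --- at a constant-factor cost in queries and with only an $O(t^2/n)$ change in acceptance probabilities --- by one that picks uniformly random seeds and runs a BFS from each, so that its acceptance probability genuinely factors through $\diskfreq{t}{G}$. This is where the isomorphism-invariance of $\Pi$ is used and where the bookkeeping of the error terms forces the hypothesis $n \ge n_1(\varepsilon, d)$; once the canonical form is available, the rest is the short Lipschitz computation above. This step closely follows the construction of canonical testers in~\cite{GR11:proximity} (see also~\cite{CPS16:testing,MMPS17}); the only new ingredient is that our canonical tester decides on the frequency vector rather than on a family of forbidden explored subgraphs.
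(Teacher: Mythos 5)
Your proposal is correct and follows essentially the same route as the paper's proof: amplify the error, apply the Goldreich--Ron/CPS-style canonicalization so the tester decides from the $\disk{t}$ types of $t$ uniformly random seeds, take $n\ge n_1$ so that disc collisions contribute only an $O(d^{2t}t^2/n)$ error, and then transfer the acceptance probability from the nearby $G'$ to $G$ via a total-variation/coupling bound using $\norm{\diskfreq{t}{G}-\diskfreq{t}{G'}}\le \frac{1}{12t}$. The only differences are presentational (you keep a randomized decision function and phrase the comparison as a $\tfrac{t}{2}$-Lipschitz property of the acceptance functional, whereas the paper thresholds to a deterministic rule and uses a union bound over the sampled vertices, with slightly different error constants), and your arithmetic checks out.
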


\subsection{Infinite Testable Property Contain Infinite Hyperfinite Subproperties}\label{sec:testable_property}

We now prove the first part of \cref{thm:main_subproperty}, \ie, every infinite testable property contains an infinite hyperfinite subproperty.

We start by showing that for any \emph{fixed} $\varepsilon$, and any graph $G$ in a testable property $\Pi$, we can find another graph $G'$ such that $G'$ is $(\varepsilon,s)$-hyperfinite and the frequency vectors of $G$ and $G'$ are close. 
\begin{lemma}
	\label{thm:hypf_subproperty_step}
	Let $\delta, \varepsilon, k > 0$. Let $\varepsilon' = \min \{ \varepsilon, \frac{\delta}{18d^{k+1}} \}$. Let $n\geq n_2(\varepsilon,\delta,d,k)$. Let $\Pi$ be a testable graph property with query complexity \( \pqueries = \pqueries(\varepsilon,d) \) and let $G \in \Pi_n$. Then, there exists $G' \in \Pi_n$ such that 
	\begin{itemize}
		\item $G'$ is $(\varepsilon, \dalonsize{\frac{\delta'}{3}}{k'})$-hyperfinite, and
		\item $\norm{\diskfreq{k}{G}-\diskfreq{k}{G'}} < \delta$,
	\end{itemize}
	where $\delta' = \min \{ \delta, \frac{1}{5 c\cdot\pqueries(\varepsilon',d)} \}$ and $k' = \max \{k, c\cdot\pqueries(\varepsilon',d) \}$ for some constant $c>1$.
\end{lemma}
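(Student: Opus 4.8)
The plan is to push $G$ through the blow‑up construction at a sufficiently fine scale, recognise the blow‑up as trivially hyperfinite, use the canonical tester to find an honest member of $\npi$ close to it, and transfer hyperfiniteness back by a direct edge‑counting argument. Set $t' \defeq c\cdot\pqueries(\varepsilon',d)$, the value of $t$ that \cref{thm:canonical_tester} produces when invoked with proximity parameter $\varepsilon'$; then $\delta' = \min\{\delta, \tfrac1{5t'}\}$ and $k' = \max\{k,t'\}$, so in particular $k'\geq k$ and $k'\geq t'$. Given $G\in\npi$, let $H'$ be the \bug{\delta'/3}{k'} of $G$, i.e.\ the disjoint union of $\lfloor n/h\rfloor$ copies of a \dfp{\delta'/3}{k'} of $G$ of some size $h\leq\dalonsize{\delta'/3}{k'}$ together with fewer than $h$ isolated vertices. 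Provided $n\geq n_0(\delta'/3,d,k')$, \cref{thm:blowup_locality} gives $\norm{\diskfreq{k'}{G}-\diskfreq{k'}{H'}}<1.1\cdot\delta'/3$. Since every connected component of $H'$ has at most $h\leq\dalonsize{\delta'/3}{k'}$ vertices, deleting no edges already certifies that $H'$ is $(\varepsilon',\dalonsize{\delta'/3}{k'})$‑hyperfinite.

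Next I would record the elementary fact that aggregating $\kdiskset$‑coordinates under disc truncation is an $\ell_1$‑contraction: for $r\leq r'$ the isomorphism type of an $r$‑disc is a function of that of the $r'$‑disc around the same vertex, so $\diskfreq{r}{A}$ is obtained from $\diskfreq{r'}{A}$ by summing disjoint blocks of coordinates, whence $\norm{\diskfreq{r}{A}-\diskfreq{r}{B}}\leq\norm{\diskfreq{r'}{A}-\diskfreq{r'}{B}}$ for all graphs $A,B$. Applying this with $r=t'\leq r'=k'$ and using $\delta'\leq\tfrac1{5t'}$,
\[
	\norm{\diskfreq{t'}{G}-\diskfreq{t'}{H'}}\;\leq\;\norm{\diskfreq{k'}{G}-\diskfreq{k'}{H'}}\;<\;\tfrac{1.1}{15t'}\;<\;\tfrac1{12t'}\fstop
\]
As $G\in\npi$, this shows $\min_{G''\in\npi}\norm{\diskfreq{t'}{H'}-\diskfreq{t'}{G''}}<\tfrac1{12t'}$, so invoking \cref{thm:canonical_tester} with parameter $\varepsilon'$ (legitimate once $n\geq n_1(\varepsilon',d)$), the canonical tester $\mathcal{T}_C$ accepts $H'$ with probability at least $2/3$. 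I claim $H'$ is $\varepsilon'$‑close to $\npi$: otherwise $H'$ is $\varepsilon'$‑far from $\npi$, so the minimum appearing in the second item of \cref{thm:canonical_tester} (with parameter $\varepsilon'$) is $0\leq\tfrac1{12t'}$, forcing $\mathcal{T}_C$ to reject $H'$ with probability at least $2/3$; since on a fixed input $\mathcal{T}_C$ accepts or rejects with probabilities summing to $1$, this contradicts acceptance with probability $\geq 2/3$.

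Hence there is $G'\in\npi$ with $|E(G')\triangle E(H')|\leq\varepsilon' dn$. Deleting from $G'$ the at most $\varepsilon' dn\leq\varepsilon dn$ edges of $E(G')\setminus E(H')$ yields a subgraph of $H'$, each component of which has at most $h\leq\dalonsize{\delta'/3}{k'}$ vertices; thus $G'$ is $(\varepsilon',\dalonsize{\delta'/3}{k'})$‑hyperfinite and therefore $(\varepsilon,\dalonsize{\delta'/3}{k'})$‑hyperfinite, which is the first bullet. For the second bullet, by the contraction property ($k\leq k'$), the triangle inequality, \cref{thm:close_freq} applied to the $\varepsilon'$‑close pair $G',H'$, and the bounds $\varepsilon'\leq\tfrac{\delta}{18d^{k+1}}$, $\delta'\leq\delta$,
\[
	\norm{\diskfreq{k}{G}-\diskfreq{k}{G'}}\;\leq\;\norm{\diskfreq{k'}{G}-\diskfreq{k'}{H'}}+\norm{\diskfreq{k}{H'}-\diskfreq{k}{G'}}\;<\;\tfrac{1.1\,\delta'}{3}+6\varepsilon' d^{k+1}\;\leq\;\tfrac{1.1\,\delta}{3}+\tfrac{\delta}{3}\;<\;\delta\fstop
\]
Setting $n_2(\varepsilon,\delta,d,k)\defeq\max\{n_0(\delta'/3,d,k'),\,n_1(\varepsilon',d)\}$ finishes the argument. \textbf{The main obstacle} is purely one of bookkeeping: the canonical tester must be invoked with the finer parameter $\varepsilon'$ so that its $t$ coincides with the $t'$ that is hard‑wired into $\delta'$ and $k'$, and then the blow‑up perturbation ($\approx\delta'/3$ in the $k'$‑disc metric), the tester's acceptance radius ($\tfrac1{12t'}$), and the \cref{thm:close_freq} error ($6\varepsilon' d^{k+1}$) must all be arranged to sit inside their respective budgets — the disc‑radius contraction lemma being exactly what lets the three radii $t'\leq k\leq k'$ be compared. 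The only non‑mechanical point is the accept/reject contradiction, which relies on $\varepsilon'\leq\varepsilon$ (so that the constructed $H'$ falls in the tester's acceptance regime) together with the fact that a tester's two output probabilities sum to one.
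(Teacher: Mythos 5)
Your proposal is correct and follows essentially the same route as the paper's own proof: build the $(\delta'/3,k')$-blow-up of $G$, invoke the canonical tester at parameter $\varepsilon'$ to conclude the blow-up is $\varepsilon'$-close to some $G'\in\npi$, then transfer hyperfiniteness by edge deletion and bound $\norm{\diskfreq{k}{G}-\diskfreq{k}{G'}}$ via \cref{thm:close_freq} and the triangle inequality. The only differences are cosmetic: you spell out the truncation-monotonicity of disc-frequency distances and the accept/reject contradiction, both of which the paper uses implicitly.
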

\begin{proof}
Let $n_2(\varepsilon,\delta,d,k)=\max\{n_0(\frac{\delta'}{3},d,k), \linebreak[1] n_1(\varepsilon',d)\}$, where $n_0,n_1$ are the numbers in the statements of \cref{thm:blowup_locality} and	\cref{thm:canonical_tester}, respectively. Let $t=c\cdot\pqueries(\varepsilon',d)$ for the constant $c>1$ from \cref{thm:canonical_tester}. By definition, it holds that $t\leq k'$.  Let 
	$H$ be the $(\frac{\delta'}{3}, k')$-blow-up graph of $G$. By \cref{thm:blowup_locality} and our assumption that $n\geq n_2$, it holds that $\norm{\diskfreq{k'}{G} - \diskfreq{k'}{H}} \leq \frac{1.1\delta'}{3}$, which implies that 
	\begin{eqnarray}
	\norm{\diskfreq{t}{G} - \diskfreq{t}{H}} \leq \frac{1.1\delta'}{3}\leq \frac{1}{12t}, \label{ineq:diskfreq}
	\end{eqnarray}
	as $t$ satisfies that $\frac{1}{5t}\geq \delta'$ and that $t\leq k'$.

	Let $\mathcal{T}_C$ be the canonical tester for $\Pi$ with parameter $\varepsilon'$ with corresponding query complexity $t=\pcqueries(\varepsilon',d)$. Then by \cref{thm:canonical_tester}, $\mathcal{T}_C$ will accept $H$ with probability at least $2/3$. This implies that $H$ is $\varepsilon'$-close to $\Pi$. Let $G' \in \Pi$ such that $H$ is $\varepsilon'$-close to $G'$. We claim that $G'$ is the graph we are looking for.
	
	First, we show that $G'$ is $(\varepsilon, \dalonsize{\frac{\delta'}{3}}{k'}))$-hyperfinite. Recall that by definition, $H$ is composed of $\lfloor n / h \rfloor$ disjoint copies of a graph of size $h$ and $n - h \cdot \lfloor n / h  \rfloor$ isolated vertices, where $h\leq \dalonsize{\frac{\delta'}{3}}{k'}$. This implies that $H$ is $(0, \dalonsize{\frac{\delta'}{3}}{k'})$-hyperfinite. %
	It follows that $G'$ is $(\varepsilon, \dalonsize{\frac{\delta'}{3}}{k'})$-hyperfinite because we can remove at most $\varepsilon' d n \leq \varepsilon d n$ edges from $G'$ to obtain a graph of which all connected components have size at most $\dalonsize{\frac{\delta'}{3}}{k'}$. %
	
	Second, we prove that $\norm{\kdiskfreq{G} - \kdiskfreq{G'}} \leq \delta$. Note that the bound given by inequality~(\ref{ineq:diskfreq}) implies $$\norm{\kdiskfreq{G} - \kdiskfreq{H}} \leq \frac{1.1\delta'}{3}\leq \frac{1.1\delta}{3},$$ as $k \leq k'$ and $\delta \geq \delta'$. Now since $H$ and $G'$ are $\varepsilon'$-close to each other, by \cref{thm:close_freq}, we have that 
	$$\norm{\kdiskfreq{H} - \kdiskfreq{G'}} < 6\varepsilon' d^{k+1}\leq \frac{\delta}{3},$$
	where the last inequality follows from our setting of parameters.  The claim then follows by applying the triangle inequality.
	This completes the proof of the lemma.
\end{proof}

The above lemma only guarantees that for \emph{every fixed} $\varepsilon > 0$, and graph $G\in\Pi_n$, one can find a graph $G_\varepsilon \in \Pi_n$ that is $(\varepsilon, \dalonsize{\delta'}{k'})$-hyperfinite (for $\delta'$ and $k'$ as in \cref{thm:hypf_subproperty_step}). However, we cannot directly use $G_\varepsilon$ to construct an infinite hyperfinite subproperty. Recall that a set $\Pi$ of graphs is called to be a hyperfinite property if there exists a function $\rho: (0, 1] \rightarrow \setn$ such that $\Pi$ is $(\varepsilon, \rho(\varepsilon))$-hyperfinite for every $\varepsilon > 0$. Now, for any $\varepsilon'<\varepsilon$, we cannot guarantee that after removing $\varepsilon' dn$ edge from $G_\varepsilon$, one can obtain a graph that is the union of connected components of constant size. Furthermore, it is not guaranteed that $G_{\varepsilon_1} \simeq G_{\varepsilon_2}$ if $\varepsilon_1 \neq \varepsilon_2$.

Our idea of overcoming the above difficulty is to start with the above hyperfinite graph $G_0:=G_\varepsilon \in \Pi_n$ for some fixed $\varepsilon>0$, and then iteratively construct a sequence of graphs $G_i\in \Pi_n$ with $i\geq 1$ from $G_{i-1}$. The constructed graph $G_{i+1}$ is guaranteed to inherit hyperfinite properties from $G_i$. The key idea is to maintain the hyperfinite properties of $G_i$ by causing only a small perturbation of its \kdisk\ vector. Choosing the parameters in this process carefully, we can maintain these hyperfinite properties for the whole sequence of graphs. Now we give the details in the following lemma. Note that the first part of Theorem~\ref{thm:main_subproperty} follows from this lemma.%

\begin{lemma}
	\label{thm:hypf_subproperty_main}
	Let $\Pi$ be an infinite $d$-bounded graph property that is testable with query complexity $\pqueries(\varepsilon,d)$. Then, there exists $\Pi' \subseteq \Pi$ such that \begin{itemize}
		\item $\Pi'$ is an infinite subproperty of $\Pi$, and
		\item there exists a monotonically decreasing function $\rho : (0, 1] \rightarrow \setn$ such that $\Pi'$ is $(\varepsilon, \rho(\varepsilon))$-hyperfinite for every $\varepsilon > 0$.
	\end{itemize}
\end{lemma}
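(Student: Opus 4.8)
The plan is to build, for each sufficiently large $n$ with $\Pi_n \neq \emptyset$, a \emph{single} graph $H^{(n)} \in \Pi_n$ that is $(\varepsilon, \rho(\varepsilon))$-hyperfinite for every $\varepsilon$, with one fixed, $n$-independent, monotone $\rho$, and then to let $\Pi'$ be the isomorphism closure of $\{H^{(n)}\}_n$. Since $\Pi$ is infinite while each $\Pi_m$ is finite, infinitely many $\Pi_m$ are nonempty, so $\Pi'$ will contain graphs of infinitely many sizes and hence be an infinite subproperty of $\Pi$; the content lies in constructing the $H^{(n)}$ and in exhibiting $\rho$. The graph $H^{(n)}$ will be produced by iterating \cref{thm:hypf_subproperty_step}: starting from an arbitrary $G_0 \in \Pi_n$ we pass to $G_1, G_2, \dots$, where $G_i$ inherits an $(\varepsilon_i, \dalonsize{\delta_i'/3}{k_i'})$-hyperfiniteness and differs only slightly from $G_{i-1}$ in its $k_i$-disc frequency vector, and \cref{thm:local_encondig_hypf} is used to carry the hyperfiniteness gained at step $i$ forward to the final graph.

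Concretely, I would fix $\varepsilon_i := 2^{-i}$ and choose $\delta_i, k_i$ recursively so that, writing $s_j := \max\{s_{j-1}, \dalonsize{\delta_j'/3}{k_j'}\}$ (with $s_0 := 1$, and $\delta_j', k_j'$ the quantities attached to $(\varepsilon_j,\delta_j,k_j)$ by \cref{thm:hypf_subproperty_step}) and $\kappa_j := 10\,s_j\,d^{2s_j+1}/\varepsilon_j$, one has (a) $\sum_{j > l}\delta_j < \frac{8\varepsilon_l}{d}\log(4/3)$ for every $l$, and (b) $k_i \geq \max\{k_{i-1}, \kappa_1, \dots, \kappa_{i-1}\}$; for instance one may commit in advance to $\delta_i := \frac{\log(4/3)}{d}2^{-2i}$ and then take any admissible $k_i$. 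Each of the numbers $s_j, \kappa_j$ is finite and does not depend on $n$, and the recursion refers only to earlier steps. With $n_2(\cdot)$ the threshold from \cref{thm:hypf_subproperty_step}, I set $N(n) := \max\{m \geq 0 : n \geq n_2(\varepsilon_i,\delta_i,d,k_i)\ \text{for all}\ i \leq m\}$ and $I := \{n : \Pi_n \neq \emptyset,\ N(n) \geq 1\}$; then $I$ is infinite and $N(n) \to \infty$ (since $n_2(\varepsilon_i,\delta_i,d,k_i)\to\infty$). For $n \in I$ I run the iteration for $i = 1, \dots, N(n)$ and put $H^{(n)} := G_{N(n)}$; by \cref{thm:hypf_subproperty_step}, each $G_i$ with $i \leq N(n)$ is $(\varepsilon_i, s_i)$-hyperfinite and $\norm{\diskfreq{k_i}{G_{i-1}} - \diskfreq{k_i}{G_i}} < \delta_i$.

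The inheritance step then runs as follows. Fix $i \leq N(n)$. For $i < j \leq N(n)$, condition (b) gives $k_j \geq \kappa_i$, and since a vertex's $\kappa_i$-disc type is a coarsening of its $k_j$-disc type, the $\ell_1$-distance of $\kappa_i$-disc frequency vectors is at most that of the corresponding $k_j$-disc frequency vectors; summing over $j$ and using (a),
\[
\norm{\diskfreq{\kappa_i}{G_i} - \diskfreq{\kappa_i}{H^{(n)}}}\ \leq\ \sum_{j=i+1}^{N(n)}\delta_j\ <\ \frac{8\varepsilon_i}{d}\log(4/3).
\]
Since $G_i \in \Pi_n$ is $(\varepsilon_i,s_i)$-hyperfinite, $\Pi$ is testable, and $\kappa_i = 10\,s_i\,d^{2s_i+1}/\varepsilon_i$, \cref{thm:local_encondig_hypf} yields that $H^{(n)}$ is $(\varepsilon_i^\star, s_i)$-hyperfinite with $\varepsilon_i^\star := 4\varepsilon_i\log(4d/\varepsilon_i)$. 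As $\varepsilon_i \downarrow 0$ and $x \mapsto x\log(4d/x)$ increases on $(0, 4d/\mathrm e)$, the $\varepsilon_i^\star$ strictly decrease to $0$; for $\varepsilon > 0$ let $i(\varepsilon)$ be the index with $\varepsilon_{i(\varepsilon)}^\star \leq \varepsilon < \varepsilon_{i(\varepsilon)-1}^\star$ ($\varepsilon_0^\star := \infty$) and set $\rho(\varepsilon) := \max\!\bigl(s_{i(\varepsilon)},\ \max\{n \in I : N(n) < i(\varepsilon)\}\bigr)$, the second term being finite because $N(n) \to \infty$ (read as $1$ if the set is empty). Both terms are non-increasing in $\varepsilon$, so $\rho$ is monotonically decreasing, with every value a finite positive integer independent of $n$. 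Finally, for $n \in I$ and $\varepsilon > 0$: if $N(n) \geq i(\varepsilon)$ then $H^{(n)}$ is $(\varepsilon_{i(\varepsilon)}^\star, s_{i(\varepsilon)})$-hyperfinite, hence $(\varepsilon, \rho(\varepsilon))$-hyperfinite because $\varepsilon \geq \varepsilon_{i(\varepsilon)}^\star$ and $\rho(\varepsilon) \geq s_{i(\varepsilon)}$; and if $N(n) < i(\varepsilon)$ then $n \leq \rho(\varepsilon)$, so the $n$-vertex graph $H^{(n)}$ is trivially $(\varepsilon, n)$- and hence $(\varepsilon, \rho(\varepsilon))$-hyperfinite. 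Thus $\Pi' := \{G : G \simeq H^{(n)}\ \text{for some}\ n \in I\}$ is an infinite hyperfinite subproperty of $\Pi$, which also yields the first assertion of \cref{thm:main_subproperty}.

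The main obstacle I expect is the tension inside the parameter schedule: to transfer the hyperfiniteness of $G_i$ all the way to $H^{(n)}$ through \cref{thm:local_encondig_hypf}, the \emph{accumulated} $\ell_1$-perturbation of the $\kappa_i$-disc vector over steps $i+1, \dots, N(n)$ must stay below $\frac{8\varepsilon_i}{d}\log(4/3)$ \emph{simultaneously for all} $i$, where the radius $\kappa_i$ at which one must measure is an enormous but fixed function of the equally enormous hyperfiniteness parameter $s_i$ produced at step $i$. Making this consistent seems to force fixing the parameters in exactly this order — first the scales $\varepsilon_i$, then a perturbation budget $\delta_i$ whose tail sum is geometrically small relative to $\varepsilon_i$, and only then a disc radius $k_i$ exceeding every $\kappa_{i'}$ already determined — while carefully tracking that all of $s_i,\kappa_i$ are independent of $n$. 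A minor additional wrinkle is that \cref{thm:hypf_subproperty_step} can only be invoked $N(n)$ times for a given $n$, so below the finest scale reached one falls back on the trivial $(\varepsilon, n)$-hyperfiniteness of $H^{(n)}$; this is harmless because at each fixed scale only finitely many $n$ are affected.
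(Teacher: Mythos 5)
Your proposal is correct and follows essentially the same route as the paper: iterate \cref{thm:hypf_subproperty_step} with geometrically shrinking $\varepsilon_i,\delta_i$ and radii $k_i$ dominating the earlier $\kappa_j=10s_jd^{2s_j+1}/\varepsilon_j$, bound the accumulated frequency perturbation by a geometric series, transfer each $(\varepsilon_i,s_i)$-hyperfiniteness to the final graph via \cref{thm:local_encondig_hypf}, and read off $\rho$ from the $s_i$'s. The only differences are bookkeeping (stopping at the largest index permitted by the $n_2$-thresholds rather than at $\varepsilon_{i'}dn<1$, measuring at radius $\kappa_i$ directly, and the explicit trivial $(\varepsilon,n)$-hyperfiniteness fallback for the finitely many $n$ below each scale), which if anything handle more carefully two points the paper glosses over.
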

\begin{proof}
	Let $X:=\{ \lvert V \rvert : G=(V,E) \in \Pi\}$ be the set of sizes $|V(G)|$ of graphs $G$ in $\Pi$. Since $\Pi$ is an infinite graph property, it holds that $X$ is also an infinite set. We show there exists a monotonically decreasing function $\rho \colon (0,1] \rightarrow \setn$ such that for each $n\in X$, we can find a graph $H^{(n)} \in \npi$ that is $(\varepsilon, \rho(\varepsilon))$-hyperfinite for every $\varepsilon > 0$. %
	This will imply that the set $\Pi' = \{H^{(n)}: n\in X \}$ is an infinite $\rho$-hyperfinite property, which will then prove the lemma.

	Let us now fix an arbitrary $n \in X$ and let $G \in \npi$ be an arbitrary graph in $\npi$. We let \textsc{FindHyper}($G,\delta,\varepsilon, k, \Pi_n$) denote the graph $G'$ that is obtained by applying Lemma~\ref{thm:hypf_subproperty_step} on $G\in \Pi_n$ with parameters $\delta,\varepsilon, k$. Now we construct $H^{(n)}$ as follows. %
	
	Let $\varepsilon_1=\frac{1}{10}$. Let $\delta_1 = 4 \varepsilon_1 / d \log (4/3)$ and let $k_1=1$. If $n<n_3(d):=n_2(\varepsilon_1,\delta_1,d,k_1)=n_2(\frac{1}{10},\frac{2}{5d}\log(4/3),d, 1)$, where $n_2$ is the number given in Lemma~\ref{thm:hypf_subproperty_step}, then we simply let $H^{(n)} = G$, which is a finite graph of size at most $n_3$. In the following, we assume that $n\geq n_3$. 
	
	Let $G_0=G$. We start by applying \cref{thm:hypf_subproperty_step} to $G_0$ with parameters $\delta = \delta_1$, $\varepsilon = \varepsilon_1$ and $k = k_1$ to obtain a graph $G_1$ that is $(\varepsilon_1, s_1)$-hyperfinite, where $s_1:=\dalonsize{\frac{\delta_1'}{3}}{k'_1}$ and $\delta_1' = \min \{ \delta_1, \frac{1}{5 c\cdot\pqueries(\varepsilon_1',d)} \}, k_1' = \max \{k_1, c\cdot\pqueries(\varepsilon_1',d) \}$, $\varepsilon_1' = \min \{ \varepsilon_1, \frac{\delta_1}{18d^{k_1+1}} \}$.

	We now iteratively construct a new $n$-vertex graph $G_{i+1}$ from a graph $G_i$ that is  $(\varepsilon_i, s_i)$-hyperfinite, where $s_i:=\dalonsize{\frac{\delta_i'}{3}}{k'_i}$. Let 
	\begin{align*}
		\delta_{i+1} & \defeq \delta_i / 2,
		&
		\varepsilon_{i+1} & \defeq \varepsilon_i / 2,
		&
		\tagged{twocol}{\displaybreak[3]\\}
		k_{i+1} & \defeq \max\{ k_i, 10 s_i d^{2 s_i + 1} / \varepsilon_i \}.
	\end{align*}
	We apply \cref{thm:hypf_subproperty_step} to $G_i$ with parameters $\varepsilon = \varepsilon_{i+1}$, $\delta = \delta_{i+1}$ and $k = k_{i+1}$ to obtain a graph $G_{i+1}$ that is $(\varepsilon_{i+1}, s_{i+1})$-hyperfinite, where $s_{i+1}:=\dalonsize{\frac{\delta_{i+1}'}{3}}{k_{i+1}'}$, and 
	\begin{align*}
		\delta_{i+1}' &= \min \{ \delta_{i+1}, \frac{1}{5 c\cdot\pqueries(\varepsilon_{i+1}',d)} \},
		& \tagged{twocol}{\displaybreak[3]\\}
		k_{i+1}' &= \max \{k_{i+1}, c\cdot\pqueries(\varepsilon_{i+1}',d) \},
		& \tagged{twocol}{\displaybreak[3]\\}
		\varepsilon_{i+1}' &= \min \{ \varepsilon_{i+1}, \frac{\delta_{i+1}}{18d^{k_{i+1}+1}} \}.
	\end{align*}
	Finally, we stop the  process after the $i'$-th iteration such that $\varepsilon_{i'} d n < 1$. We set $H^{(n)} = G_{i'}$. The pseudo-code of the whole process is given in \cref{alg:hyperfinite} (which invokes \cref{alg:setsize} for setting the parameters as a subroutine).
	
	\algtext*{EndWhile}
	\algtext*{EndIf}
	\algtext*{EndFor}
	
	\begin{algorithm}[H]
		\caption{Construction of $H^{(n)}$}~\label{alg:hyperfinite}
		\begin{algorithmic}[1]
			\Procedure{Construct}{$G,\Pi_n$}	
			\State $G_0\gets G$, $\varepsilon_1\gets \frac{1}{10}$\untagged{twocol}{, }
			\tagged{twocol}{\State}$\delta_1 \gets 4 \varepsilon_1 / d \log (4/3)$, $k_1 \gets 1$		
			\State $G_1\gets$\textsc{FindHyper}($G_0,\delta_1,\varepsilon_1, k_1,\Pi_n$)
			\State $s_1\gets$\textsc{SetSize}($\varepsilon_1,\delta_1,k_1,\Pi_n$) %
			\State $i\gets 1$
			\While{$\varepsilon_i dn\geq 1$}
			\State $\varepsilon_{i+1} \gets \varepsilon_i / 2$, $\delta_{i+1} \gets \delta_i / 2$\untagged{twocol}{,}
			\tagged{twocol}{\State}$k_{i+1} \gets \max\{ k_i, 10 s_i d^{2 s_i + 1} / \varepsilon_i\}$
			\State $G_{i+1}\gets$\textsc{FindHyper}($G_i,\delta_{i+1},\varepsilon_{i+1}, k_{i+1},\Pi_n$)~%
			\State $s_{i+1}\gets$\textsc{SetSize}($\varepsilon_{i+1},\delta_{i+1},k_{i+1},\Pi_n$)
			\State{$i\gets i+1$}
			\EndWhile
			\State{\Return $H^{(n)}\gets G_i$}
			\EndProcedure
		\end{algorithmic}
	\end{algorithm}

\begin{algorithm}[H]
	\caption{Set the value of $s$}~\label{alg:setsize}
	\begin{algorithmic}[1]
		\Procedure{SetSize}{$\varepsilon,\delta,k,\Pi_n$}	
		\State $\varepsilon' \gets \min \{ \varepsilon, \frac{\delta}{18d^{k+1}} \}$\untagged{twocol}{, }
		\tagged{twocol}{\State}$\delta' \gets \min \{ \delta, \frac{1}{5c\cdot \pqueries(\varepsilon',d)} \}$\untagged{twocol}{, }
		\tagged{twocol}{\State}$k' \gets \max \{k, c\cdot\pqueries(\varepsilon',d) \}$
		\State $s\gets \dalonsize{\frac{\delta'}{3}}{k'}$
		\State{\Return $s$}
		\EndProcedure
	\end{algorithmic}
\end{algorithm}

	Now we also note that by the construction and Lemma~\ref{thm:hypf_subproperty_step}, it holds that for any $i\geq 0$, 
	$$\norm{\diskfreq{k_{i+1}}{G_{i+1}} - \diskfreq{k_{i+1}}{G_i}}<\delta_{i+1}.$$
	By noting that $k_j\leq k_{i+1}$ %
	for any $j\leq i+1$, we have that 
	\begin{alignat*}{10}
		& \norm{\diskfreq{k_{j}}{G_{i+1}} - \diskfreq{k_{j}}{G_i}} \tagged{twocol}{\displaybreak[3]\\}
		\leq {}& \norm{\diskfreq{k_{i+1}}{G_{i+1}} - \diskfreq{k_{i+1}}{G_i}} &
		< {}& \delta_{i+1}. &
	\end{alignat*}
	
	Furthermore, we have the following claim.
	\begin{claim}\label{claim:freq_diff}
		It holds that $\norm{\diskfreq{k_j}{G_{i+1}} - \diskfreq{k_j}{G_j}} < 8 \varepsilon_j / d \log(4/3)$ for all $j \leq i+1$.
	\end{claim} 
	\begin{proof}
		Recall that $\varepsilon_{i+1} = \varepsilon_i / 2$ and $\delta_{i+1} = \delta_i / 2$ for all $i > 1$. We have
		\begin{alignat*}{10}
			& \norm{\diskfreq{k_j}{G_{i+1}} - \diskfreq{k_j}{G_j}} & \tagged{twocol}{\displaybreak[3]\\}
			\leq {}& \sum_{\ell=j}^{i} \norm{\diskfreq{k_j}{G_{\ell+1}} - \diskfreq{k_j}{G_\ell}} & \tagged{twocol}{\displaybreak[3]\\}
			\leq {}& \delta_j \sum_{\ell=j}^{i} \frac{1}{2^{\ell-j}} & \tagged{twocol}{\displaybreak[3]\\}
			\leq 2 {}& \delta_j,
		\end{alignat*}
		where the first inequality follows from the triangle inequality and the second inequality follows from the convergence of the geometric series $\sum_{\ell=0}^{\infty} 2^{-\ell} = 2$.
		Since $\delta_j = \frac{\delta_1}{2^{j-1}}$, $ \varepsilon_j = \frac{\varepsilon_1}{2^{j-1}}$ and $\delta_1=4\varepsilon_1/d\log(4/3)$, it holds that $\delta_j=\frac{4 \varepsilon_j}{d \log (4/3)}$. This completes the proof of the claim.
	\end{proof}
	
Now by the fact that $G_j\in \Pi_n$ is $(\varepsilon_j,s_j)$-hyperfinite, Claim~\ref{claim:freq_diff} and \cref{thm:local_encondig_hypf}, it follows that %
$G_{i+1}$ is $(4 \varepsilon_j \log \frac{4d}{\varepsilon_j}, s_j)$-hyperfinite, for any $j\leq i+1$. %

	In particular, let $i'$ denote the index such that our algorithm outputs $G_{i'}$, i.e., $H^{(n)}=G_{i'}$. For any $\varepsilon>0$, $j_\varepsilon = \min \{ i \mid 1\leq i\leq i', 4 \varepsilon_i \log \frac{4d}{\varepsilon_i} \leq \varepsilon \}$. It is important to note that even though $i'$ might depend on $n$, the index $j_\varepsilon$ is always independent of $n$, and depends only on $\varepsilon$. %
	
	Then we define 
	$$\rho(\varepsilon):=\max\{n_3(d), s_{j_\varepsilon}\}.$$ 
	
	By the above analysis, for any $n\in X$ with $n\geq n_3(d)$, we find an $n$-vertex graph $H^{(n)}\in \Pi_n$ satisfying the following: for any $\varepsilon>0$, there exists $j_\varepsilon$ such that by removing $(4 \varepsilon_{j_\varepsilon} \log \frac{4d}{\varepsilon_{j_\varepsilon}}) \cdot dn \leq \varepsilon dn$ edges, one can decompose $H^{(n)}$ into connected components each of which has size at most $s_{j_\varepsilon}\leq \rho(\varepsilon)$. Thus, it holds that $H^{(n)}$ is $(\varepsilon, \rho(\varepsilon))$-hyperfinite for any $\varepsilon>0$. 
	
	This completes the proof of the lemma.

\end{proof}

\subsection{Every Complement of a Non-Trivially Testable Property Contains a Hyperfinite Subproperty}\label{sec:complement}

We now prove the second part of \cref{thm:main_subproperty}, \ie, the complement of every non-trivially testable property contains a hyperfinite subproperty. The formal definition of non-trivially testable property is given as follows.

\begin{Definition}[non-trivially testable]
	\label{def:nontrivially_testable}
	A graph property $\Pi$ is \emph{non-trivially testable} if it is testable and there exists $\varepsilon >0$ such that the set of graphs that is $\varepsilon$-far from $\Pi$ is infinite. 
\end{Definition}

Note that for a property that is \emph{not} non-trivially testable, for any $\varepsilon>0$, we can always accept all graphs of size $n\geq n_4$, where $n_4:=n_4(\varepsilon)$ is a finite number (that might not be computable) such that there are at most $n_4$ graphs that are $\varepsilon$-far from having the property. For graphs of size smaller than $n_4$, one can simply read the whole graph to test if the graph satisfies the property or not.

The second part of Theorem~\ref{thm:main_subproperty} will follow from the following lemma.
\begin{lemma}
	\label{thm:complement_hypf_subproperty}
	The complement of every non-trivially testable $d$-bounded graph property $\Pi$ contains an infinite $(0, c)$-hyperfinite subproperty, where $c$ depends only on $\Pi$.
\end{lemma}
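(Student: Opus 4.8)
The plan is to manufacture, for infinitely many orders $n$, an explicit $n$-vertex graph $H^{(n)}$ that simultaneously lies in $\picomp$ and is a disjoint union of constant-size pieces; the isomorphism closure of $\{H^{(n)}\}$ is then the desired infinite $(0,c)$-hyperfinite subproperty. The pieces come from blow-up graphs of graphs witnessing non-trivial testability, and membership in $\picomp$ is forced by the canonical tester of \cref{thm:canonical_tester}.

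First I would use non-trivial testability to fix once and for all a constant $\varepsilon > 0$ with $\piepsfar$ infinite; crucially $\varepsilon$ is then determined by $\Pi$. Put $t \defeq c_0 \cdot \pqueries(\varepsilon, d)$, where $c_0 > 1$ is the constant of \cref{thm:canonical_tester}, choose $\delta$ small enough that $1.1\delta < \tfrac{1}{12 t}$ (say $\delta = \tfrac{1}{14 t}$), and set $c \defeq \dalonsize{\delta}{t}$; all of $\varepsilon,t,\delta,c$ depend only on $\Pi$ and $d$. Since there are only finitely many $d$-bounded graphs of any fixed order, infinitude of $\piepsfar$ forces infinitely many $n$ with $\npiepsfar \neq \emptyset$, and discarding the finitely many $n$ below $\max\{n_0(\delta,d,t), n_1(\varepsilon,d)\}$ (the thresholds of \cref{thm:blowup_locality} and \cref{thm:canonical_tester}) still leaves infinitely many. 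For each surviving $n$ I fix $G_n \in \npiepsfar$ and let $H^{(n)}$ be the \bug{\delta}{t} of $G_n$.

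The two required properties then drop out. By \cref{thm:blowup_locality}, $\norm{\diskfreq{t}{G_n} - \diskfreq{t}{H^{(n)}}} < 1.1\delta < \tfrac{1}{12 t}$; since $G_n \in \npiepsfar$, this gives $\min_{G' \in \npiepsfar}\norm{\diskfreq{t}{H^{(n)}} - \diskfreq{t}{G'}} < \tfrac{1}{12 t}$, so by the second condition of \cref{thm:canonical_tester} the canonical tester rejects $H^{(n)}$ with probability at least $2/3$. On the other hand, the first condition of \cref{thm:canonical_tester}, applied with the minimum equal to $0$, shows this same tester accepts every graph of $\npi$ with probability at least $2/3$; since no randomized algorithm can do both on a single input, $H^{(n)} \notin \npi$, i.e. $H^{(n)} \in \picomp$. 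Moreover, by the definition of blow-up graph $H^{(n)}$ is a disjoint union of copies of a graph of size at most $\dalonsize{\delta}{t} = c$ together with isolated vertices, so every connected component has at most $c$ vertices, i.e. $H^{(n)}$ is $(0,c)$-hyperfinite. Finally I would let $\Pi'$ be the closure of $\{H^{(n)}\}$ under isomorphism: it is a subproperty of $\picomp$ (which is isomorphism-invariant), it is infinite since the $H^{(n)}$ have pairwise distinct orders, and $(0,c)$-hyperfiniteness is isomorphism-invariant, which completes the argument.

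I do not expect a genuine obstacle; the proof is essentially a reduction to \cref{thm:canonical_tester} and \cref{thm:blowup_locality}. The points that need care are: (a) justifying that the canonical tester is a bona fide tester for $\Pi$ on $\npi$ — this is precisely why the first condition of \cref{thm:canonical_tester} is invoked with minimum $0$; (b) choosing $\delta$ so the blow-up perturbation $1.1\delta$ stays below $\tfrac{1}{12 t}$ while using the same scale $t$ as the canonical tester, so that the blow-up's frequency vector lands inside the tester's decision net; and (c) confirming that $c$ depends only on $\Pi$, which hinges on fixing $\varepsilon$ at the outset rather than letting it vary.
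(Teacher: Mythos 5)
Your proposal is correct and follows essentially the same route as the paper: fix an $\varepsilon$ witnessing non-trivial testability, take the blow-up graph of an $\varepsilon$-far graph $G_n$ for each admissible order, and use the two conditions of \cref{thm:canonical_tester} together with \cref{thm:blowup_locality} to force $H^{(n)} \notin \Pi$, the blow-up structure giving $(0,c)$-hyperfiniteness with $c$ depending only on $\Pi$. The only (harmless) differences are cosmetic: you derive the contradiction by applying both tester conditions to $H^{(n)}$ while the paper applies them to $G_n$, and you are slightly more careful about the constant $\delta$ and about closing $\{H^{(n)}\}$ under isomorphism.
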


\begin{proof}
	Since $\Pi$ is non-trivially testable, by \cref{def:nontrivially_testable}, there exists $\varepsilon > 0$ and an infinite set $N \subseteq \mathbb{N}$ such that for every $n \in N$, $\npiepsfar$ is non-empty. Let $\varepsilon > 0$ be the largest value such that $\piepsfar$ contains an infinite number of graphs. Let $\delta = \frac{1}{13 t}$, where $t:=\pqueries(\varepsilon,d)$ denotes the query complexity of $\Pi$. Let $k = \pcqueries(\varepsilon,d)=t^{2t}$. Fix an arbitrary $n \in N$ such that $n\geq n_0$, where $n_0=n_0(\delta,d,k)$ is the number given in Lemma~\ref{thm:blowup_locality}. Let $G_n \in \npiepsfar$ be an arbitrary graph in $\npiepsfar$. Let $H^{(n)}$ be the $(\delta, k)$-blow-up graph of $G_n$. Note that $H^{(n)}$ is $(0, k)$-hyperfinite. Now we claim that $H^{(n)} \notin \Pi$. 
	
	Assume on the contrary that $H^{(n)} \in \Pi$. By \cref{thm:blowup_locality}, $\norm{\kdiskfreq{G_n} - \kdiskfreq{H^{(n)}}} \leq 1.1\delta$. Therefore, by Theorem~\ref{thm:canonical_tester}, the canonical tester for $\Pi$ accepts $G_n$ with probability at least $2/3$, which is a contradiction to the fact that $G_n \in \npiepsfar$. The lemma follows by defining the set $\Pi':=\{ H^{(n)}: n\in N \}$ and $c=k=\pqueries(\varepsilon,d)^{2\pqueries(\varepsilon,d)}$.
\end{proof}

\subsection{Proof of \cref{thm:canonical_tester}}
\label{sec:canonical_tester}
In this section, we give the proof sketch of  \cref{thm:canonical_tester}. The first part (i.e., the transformations from the original tester \( \mathcal{T} \) to the canonical tester \( \mathcal{T}_C \)) of the proof follows from the proof of the canonical testers in \cite{GR11:proximity,CPS16:testing}, and we sketch the main ideas for the sake of completeness. The last part (i.e., how the behaviour of tester \( \mathcal{T}_C \) relates to the frequency vector) of the proof differs from previous work and it is tailored to obtain the characterization as stated in the theorem, which in turn will be suitable for our analysis of the structures of constant-query properties. 
\begin{proof}[Proof Sketch of \cref{thm:canonical_tester}]
	Let \( \mathcal{T} \) be a tester for \( \npi \) on \( n \)-vertex graphs with error probability (reduced to) at most \( \frac{1}{24} \). The query complexity of the tester \( \mathcal{T}\) will be $t:=c\cdot \pqueries(\varepsilon,d)$ for some constant $c>1$, where $\pqueries(\varepsilon,d)$ is the query complexity of the tester for $\Pi$ with error probability at most $\frac13$. We will then transform $\mathcal{T}$ to a canonical tester $\mathcal{T}_C$ in the same way as in the proof of Lemma 3.1 in~\cite{CPS16:testing} (see also \cite{GR11:proximity}).
	
	Slightly more precisely, we first convert \( \mathcal{T} \) into a tester \( \mathcal{T}_1 \) that samples random \disks{t}\ of the input graph and answers all of \( \mathcal{T} \)'s queries using the corresponding subgraph \( H \). That is, it samples a set $S$ of \( t\) vertices and then makes its decision on the basis of the $t$-discs rooted at vertices in $S$ by using uniformly random ordering of vertices and emulating the execution of $\mathcal{T}$ accordingly on the permuted graph. 
	
Then, we convert \( \mathcal{T}_1 \) into a tester \( \mathcal{T}_2 \) whose output depends only on the edges and non-edges in the explored subgraph, the ordering of all explored vertices and its own random coins. This can be done by letting  \( \mathcal{T}_2 \) accept the input graph \( G \) with the average probability that \( \mathcal{T}_1 \) accepts \( G \) over all possible labellings of $H$ with corresponding sequences of queries and answers. 

Next, we convert \( \mathcal{T}_2 \) into the final tester \( \mathcal{T}_3 \) whose output is independent of the ordering of all explored vertices. This can be done by letting \( \mathcal{T}_3 \) accept with probability that is equal to the average of all acceptance probabilities of \( \mathcal{T}_2 \) over all possible relabellings of vertices in $H$.

Finally, we convert $\mathcal{T}_3$ into a tester $\mathcal{T}_C$ that returns the output deterministically according to the unlabeled version of the explored subgraph and its roots. This can be done by letting $\mathcal{T}_C$ accepts the input graph if and only if the probability associated with the explored subgraph $H$ is at least $1/2$. %

By similar arguments in the proof of Lemma 3.1 in~\cite{CPS16:testing}, we can show that $\mathcal{T}_C$ is a tester for $\Pi$ that has error probability at most $1/12$. That is, for each $G\in \Pi_n$, $\mathcal{T}_C$ accepts $G$ with probability at least $1-\frac{1}{12}$. For any graph $G\in \npiepsfar$, $\mathcal{T}_C$ rejects $G$ with probability at least $1-\frac{1}{12}$. Furthermore, note that the query complexity of $\mathcal{T}_C$ is at most $t\cdot d^{t+2}$.

	Now if we let $n_1:= 12 d^{2t} t^2$, then for any $n\geq n_1$, it holds that with probability at least $1-\frac{d^{2t} t^2}{n}\geq 1- \frac{1}{12}$, none of the $t$ sampled $t$-discs will intersect. That is, with probability $1-\frac{1}{12}$, the decision of the tester $\mathcal{T}_C$ will only depend on the structure (or the isomorphic types) of the explored $t$ \emph{disjoint} $t$-discs. 
	
Let $\delta_C=\frac{1}{12t}$. %
We now consider the input graph $G$ satisfying that $\min_{G' \in \npi} \norm{\diskfreq{t}{G} - \diskfreq{t}{G'}} \leq \delta_C$. Let $G'\in \npi$ denote a graph for which this minimum is attained. Note that there is a bijection $\Phi: V(G)\rightarrow V(G')$ such that $\fdisk{t}{G}{v} \ncong \fdisk{t}{G'}{\Phi(v)}$ for at most a $\delta_C$-fraction of the vertices $v \in V(G)$. %
	Recall that $S$ denotes the sample set. Note that for any vertex $v$ that is sampled independently and uniformly at random, the probability that $\fdisk{t}{G}{v} \ncong \fdisk{t}{G'}{v}$ is bounded by the total variation distance of $\diskfreq{t}{G}$ and $\diskfreq{t}{G'}$, which is at most $\delta_C / 2$ by our assumption. By the union bound, the probability that there exists some vertex $v\in S$ with $\fdisk{t}{G}{v} \ncong \fdisk{t}{G'}{\Phi(v)}$ is at most $|S| \cdot \delta_C \leq t \cdot \frac{1}{12t} \leq \frac{1}{12}$. Since $\mathcal{T}_C$ rejects $G'$ with probability at most $\frac{1}{12}$ and the probability that there exists some pair of all $t$ sampled $t$-discs intersecting is at most $\frac{1}{12}$, $\mathcal{T}_C$ rejects $G$ with probability at most $\frac{1}{12} + \frac{1}{12} + \frac{1}{12} = \frac{1}{4}$.
	
	The case when $G$ satisfying that \( \min_{G' \in \npiepsfar} \norm{\diskfreq{t}{G} - \diskfreq{t}{G'}} \leq \delta_C \) can be analyzed analogously. In particular, if $G$ satisfies this condition, then $\mathcal{T}_C$ accepts $G$ with probability at most $\frac{1}{12} + \frac{1}{12} + \frac{1}{12} = \frac{1}{4}$.

	Therefore, $\mathcal{T}_C$ accepts (resp. rejects) $G$ with probability at least $1-\frac{1}{4}>\frac23$, if \( \min_{G' \in \npi} \norm{\diskfreq{t}{G} - \diskfreq{t}{G'}} \leq \delta_C \) (resp. if \( \min_{G' \in \npiepsfar} \norm{\diskfreq{t}{G} - \diskfreq{t}{G'}} \leq \delta_C \)). %
	
	This completes the proof of the theorem.
\end{proof}

\section{Do Testable Non-Hyperfinite Properties Contain Infinitely Many Expanders?}

In the light of the previous result, a natural question is whether every testable infinite property that is not hyperfinite
must contain an infinite subproperty that consists only of expander graphs or graphs that are close to an expander graph.
Unfortunately, such a statement is not true as the aforementioned Theorem~\ref{theorem:no-expander} shows. In the following, we present the proof of Theorem~\ref{theorem:no-expander}.

\begin{proof}[Proof of Theorem~\ref{theorem:no-expander}]
We start by defining the graph property. $\Pi$ consists of all graphs $G=(V,E)$ with maximum degree $d$ that have a single 
connected component with $\lceil |V|/2 \rceil$ vertices and the remaining $\lfloor |V|/2 \rfloor$
connected components are isolated vertices. We observe that $\Pi$ is not hyperfinite as the big connected component
may be an expander graph and so it requires to remove $\Omega(n)$ edges to partition it into small connected components.
Furthermore, it requires to insert $\Omega(n)$ edges to make the graph connected, which is a necessary condition for having expansion greater than $0$.
Finally, we show that the property can be tested with query complexity $O(d/\varepsilon^2)$. 

The algorithm consists of two stages. In the first stage, we sample $O(1/\varepsilon^2)$ vertices uniformly
at random and estimate the number of isolated vertices. 
We reject, if this number differs from $\lfloor |V|/2 \rfloor$ by more than $\varepsilon |V|/8$.
In the second stage, we sample another $O(1/\varepsilon)$ vertices and perform, for every sampled vertex $v$, a BFS until we have explored the whole 
connected component of $v$ or we have explored more than $12/\varepsilon$ vertices. We may assume that the graph contains more than, say, 
$100/\varepsilon$ vertices as otherwise, we can simply query the whole graph.
The tester rejects, if it finds a connected component that is not an isolated vertex.

We now prove that the above algorithm (with proper choice of constants) is a property tester. Our analysis (in particular for the second stage)
uses some ideas that were first introduced in an analysis of a connectivity tester in \cite{GR02:testing}.
We first show that the tester accepts every $G \in \Pi$. For some sufficiently large constant in the $O$-notation we obtain by Chernoff bounds that the first stage of the tester approximates with probability at least $9/10$ such that the number of isolated vertices in $G$ with an additive error 
of $\varepsilon |V|/8$. If this approximation succeeds, the first stage of the tester does not reject. Furthermore, the second stage
never rejects a graph $G\in \Pi$. Thus, the tester accepts with probability at least $9/10$.
Next consider a graph that is $\varepsilon$-far from $\Pi$ and begin with the following claim.
\begin{claim}
\label{claim:structure}
Let $G$ be $\varepsilon$-far from $\Pi$. Then either the number of isolated vertices in $G$ differs by more than $\varepsilon |V|/4$ from $\lfloor |V|/2 \rfloor$ or there are more than $\varepsilon |V|/12$ 
connected components of size at most $12/\varepsilon$ that are not isolated vertices. 
\end{claim}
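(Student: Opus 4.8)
\textbf{Proof plan for Claim~\ref{claim:structure}.}
The plan is to prove the contrapositive: assume $G$ has neither too few nor too many isolated vertices (within $\varepsilon|V|/4$ of $\lfloor |V|/2\rfloor$) \emph{and} has at most $\varepsilon|V|/12$ non-isolated connected components of size at most $12/\varepsilon$, and then exhibit an explicit sequence of at most $\varepsilon d|V|$ edge modifications transforming $G$ into a graph in $\Pi$. First I would set $n = |V|$ and let $I$ denote the set of isolated vertices of $G$, so $\bigl| |I| - \lfloor n/2\rfloor \bigr| \le \varepsilon n/4$. The target graph in $\Pi$ must have exactly $\lfloor n/2\rfloor$ isolated vertices and one connected component on the remaining $\lceil n/2\rceil$ vertices; I would build it in stages and bound the number of edge edits in each stage by a fraction of $\varepsilon d n$ adding up to at most $\varepsilon d n$.

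The key steps, in order: (i) \emph{Deal with the small non-isolated components.} By assumption there are at most $\varepsilon n/12$ connected components that are neither isolated vertices nor ``large'' (size $> 12/\varepsilon$); each such component has fewer than $12/\varepsilon$ vertices and hence fewer than $d\cdot 12/\varepsilon$ edges. Deleting all their edges turns their vertices into isolated vertices and costs at most $(\varepsilon n/12)\cdot(12d/\varepsilon) = dn$ --- too much, so instead I would \emph{not} delete them but rather plan to absorb them into the single big component later; the real use of the bound $\varepsilon n/12$ on their count is that the number of such components is small enough that \emph{connecting} each of them to one fixed large component costs only one edge per component, i.e.\ at most $\varepsilon n/12 \le \varepsilon d n$ edges total. (ii) \emph{Merge all large components into one.} The number of components of size $> 12/\varepsilon$ is at most $n/(12/\varepsilon) = \varepsilon n/12$, so connecting them all into a single component costs at most $\varepsilon n/12$ edges (one edge per merge, minus one). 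After steps (i)--(ii) the non-isolated part of $G$ is a single connected component $C$ on some set of vertices, and we have spent at most $\varepsilon n/6$ edge additions. (iii) \emph{Fix the vertex counts.} Let $c = |V(C)|$ and $m = |I| = n - c$. We need $c = \lceil n/2\rceil$. Since $\bigl|m - \lfloor n/2\rfloor\bigr|\le \varepsilon n/4$, we have $\bigl|c - \lceil n/2\rceil\bigr|\le \varepsilon n/4$. If $c$ is too large, we move $c - \lceil n/2\rceil$ vertices from $C$ to $I$: for each such vertex, delete its (at most $d$) incident edges; but we must choose these vertices so that $C$ stays connected --- take them to be leaves of a spanning tree of $C$ --- so each removal deletes at most $d$ edges but more carefully at most $1$ tree-edge plus $d-1$ non-tree edges; a crude bound of $d$ edges per vertex gives cost at most $(\varepsilon n/4)\cdot d = \varepsilon d n/4$. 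If $c$ is too small, we move $\lceil n/2\rceil - c \le \varepsilon n/4$ isolated vertices into $C$, attaching each by a single edge, costing at most $\varepsilon n/4 \le \varepsilon d n/4$ edges.

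Summing the costs: at most $\varepsilon n/6 + \varepsilon d n /4 \le \varepsilon d n$ edge modifications (using $d\ge 1$), and the resulting graph has exactly $\lfloor n/2\rfloor$ isolated vertices and one connected component on $\lceil n/2\rceil$ vertices with maximum degree still at most $d$ --- here one must double-check the degree bound is preserved, which requires choosing, whenever we attach a vertex by a new edge, an endpoint in $C$ of degree $< d$; such an endpoint exists because a connected graph on $\ge 2$ vertices with all degrees equal to $d$ and $d$ constant still has... actually any connected non-complete-ish component on more than $d$ vertices has a vertex of degree $<d$ by a counting/handshake argument, and we may assume $c > d$ since $c \ge \lceil n/2\rceil - \varepsilon n/4$ is large. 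So $G$ is $\varepsilon$-close to $\Pi$, contradicting that $G$ is $\varepsilon$-far, which proves the claim. \textbf{The main obstacle} I expect is the bookkeeping in step (iii): ensuring connectivity is maintained while relocating vertices, and ensuring the degree bound $d$ is never violated when adding the ``merging'' and ``attaching'' edges --- these are handled by always working relative to a spanning tree and always attaching at a vertex of degree $<d$, but getting the constants to fit under $\varepsilon d n$ cleanly (rather than, say, $2\varepsilon d n$) is where the argument needs care, and may force slightly different constants in the tester's rejection thresholds than the ones stated.
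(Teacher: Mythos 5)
Your overall strategy is exactly the paper's: prove the contrapositive by explicitly editing $G$ into a member of $\Pi$ within a budget of $\varepsilon d|V|$ modifications, first merging all non-isolated components (at most $\varepsilon n/12$ small ones by assumption plus at most $\varepsilon n/12$ large ones by counting), then correcting the number of isolated vertices by attaching isolated vertices or detaching spanning-tree leaves. The accounting of the leaf-removal step ($\le d$ deletions per vertex, $\le \varepsilon d n/4$ total) also matches the paper.

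There is, however, a genuine gap in how you preserve the degree bound, and it infects both the merging steps (i)--(ii) and the attachment step in (iii). You charge one edge per merge/attachment and justify the existence of a degree-$<d$ endpoint by the claim that ``any connected component on more than $d$ vertices has a vertex of degree $<d$ by a counting/handshake argument.'' That claim is false: connected $d$-regular graphs exist on arbitrarily many vertices (e.g.\ cycles for $d=2$), so both the component you are absorbing and the big component $C$ may have every vertex at full degree, and then no edge can be added at all without a prior deletion. Moreover, in steps (i)--(ii) you do not address the degree constraint on either endpoint, so the stated cost of one edge per merge is not justified. The paper repairs exactly this point with a cycle argument: if every vertex of a connected component has degree $d>1$, the component contains a cycle, and deleting one cycle edge frees a vertex without destroying connectivity; hence each merge costs at most $3$ modifications and each attachment of an isolated vertex at most $2$. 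With these corrected costs your budget still closes (at most $\varepsilon n/2$ for all merges plus $\max\{\varepsilon n/2,\ \varepsilon d n/4\}$ for the isolated-vertex correction, which is at most $\varepsilon d n$ for $d\ge 1$), so the gap is repairable, but as written the degree-preservation argument does not stand.
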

\begin{proof}
Assume that the claim is not true and there is 
a graph $G$ that is $\varepsilon$-far from $\Pi$, the number of isolated vertices in $G$ differs by at most $\varepsilon |V|/4$
from  $\lfloor |V|/2 \rfloor$ and there are at most $\varepsilon |V|/12$ connected components of size at most $12/\varepsilon$ 
that are not isolated vertices. We will argue that in this case, we can modify at most $\varepsilon dn$ edges to turn $G$ into
a graph that has $\Pi$, which is a contradiction. 
We start with the connected components that are not isolated vertices. We can add a single edge to connect two such components. 
However, we must make sure that we are not violating the degree bound. If both connected components have a vertex of degree at
most $d-1$, we can simply add an edge to connect them. If all vertices of a connected component have degree $d > 1$ then the component
contains a cycle. We can remove an edge from the cycle without destroying connectivity. Thus, we need to modify at most $3$ edges to
connect two connected components. We observe that there are at most $\varepsilon n/12$ connected components of size more than $12/\varepsilon$
and so there are at most $\varepsilon n/6$ connected components that are not isolated vertices. We can create a single connected
component out of them by modifying $\varepsilon n/2$ edges. 
Our previous modifications did not change the number of isolated vertices in $G$, so it still differs by at most $\varepsilon |V|/12$
from $\lfloor |V|/2 \rfloor$. If there are too many isolated vertices, we can connect each of them to the big connected component
with at most $2$ edge modifications resulting in at most $\varepsilon n/2$ modifications. If there are too few isolated vertices, 
we need to disconnect vertices from the big connected
component. For this purpose consider a spanning tree $T$ of the connected component. We will remove a leave of $T$. This can be done 
with $d$ edge modifications and does not change connectivity. Thus we can create exactly  $\lfloor |V|/2 \rfloor$ isolated vertices using 
at most $\varepsilon d n/4$ modifications. 
Overall, the number of modifications is at most $\varepsilon d n$, which proves that the graph was not $\varepsilon$-far from $\Pi$. 
A contradiction.
\end{proof}
It remains to show that our tester rejects any $G$ that is $\varepsilon$-far from $\Pi$. By Claim \ref{claim:structure} we know 
that either the number of isolated vertices in $G$ differs by more than $\varepsilon |V|/4$ from $\lfloor |V|/2 \rfloor$
or $G$ has at least $\varepsilon |V|/12$ connected components of size at most $12/\varepsilon$. 
In the first case, our algorithm rejects with probability at least $9/10$ as it approximates the number of isolated
vertices with additive error $\varepsilon |V|/8$ and rejects if the estimate differs by more than $\varepsilon |V|/4$ 
from  $\lfloor |V|/2 \rfloor$. In the second case we observe that for sufficiently large constant in the $O$-notation
with probability at least $9/10$ we sample a connected component of size at most $12/\varepsilon$. In this case our algorithm
detects the component and rejects. Thus, with probability at least $9/10$ the algorithm rejects.
The query complexity and running time of the algorithm are dominated by the second stage, which can be done in $O(d/\varepsilon^2)$ time.
\end{proof}

Since an expander graph is connected, it follows also that this property contains no graphs that are close to expander graphs.
Consider the \kdisks\ of graphs from the property $\Pi$ in the proof of Theorem \ref{theorem:no-expander}. Recall
that the graphs from the property consist of a connected graph on $\lceil |V|/2 \rceil$ vertices and $\lfloor |V|/2 \rfloor$
isolated vertices. We may view graphs in $\Pi$ as the union of two graphs $G_1$ and $G_2$ of roughly the same size that satisfy 
two different properties: $G_1$ is connected and the $G_2$ has no edges. The \kdisks\ of these graphs have two interesting properties:
\begin{itemize}
\item
no \kdisk\ in $G_1$ occurs in $G_2$ and vice versa, and
\item
their centers cannot be adjacent in any graph. 
\end{itemize}
If $G_1$ and $G_2$ have the above properties then this means that the \kdisks\ cannot \enquote{mix} in any connected component of another graph.
Thus, we know whether they are supposed to come from $G_1$ or $G_2$, which is helpful to design a property tester.
We remark that this phenomenon can also happen for other \kdisks\ like, for example, if $G_1$ is $4$-regular and $G_2$ is $6$-regular.
We believe that understanding this phenomenon is important for a characterization of testable properties in bounded-degree graphs as
we can use it to construct other testable properties in a similar way as above. 
This motivates the following definition:
\begin{Definition}\label{def:incompatible}
We call two \kdisk\ isomorphism types $D_1,D_2$ with roots $u_1,u_2$ \emph{incompatible}, if there exists no graph in which two adjacent 
vertices $u_1$ and $u_2$ have \kdisk\ type $D_1$ and $D_2$, respectively.
\end{Definition}

\section{Partitioning Theorem for Bounded-Degree Graphs}

The fact that there are testable properties that are composed of other properties with disjoint sets of \emph{incompatible} \kdisks\ (see Definition~\ref{def:incompatible}) leads
to the question if we can always decompose the vertex set of a graph into sets such that the \kdisk\ types behave \enquote{similarly}
within each set. A simple partition would be to divide the vertex set according to its \kdisk\ isomorphism type.
But such a partition is meaningless. In the light of previous work, we decided to consider the case that a partition has to have only
a small fraction of the edges between the partition classes. We would like to obtain a partition into sets $S_1,\dots, S_r$ and a set
$T$ (which is a separator), such that no edges are between $S_i$ and $S_j$ for any $i \neq j$  and $T$ is of small size. 
The next question is to specify what it means to behave \enquote{similarly}. One such specification is to ask that the \kdisk\ distribution 
inside the partition is stable for every subset. Obviously, this cannot always be the case unless there is only one \kdisk\ isomorphism
type. Instead, we are only looking at sets that do not have too many outgoing edges. For these subsets we can show that they
always have roughly the same \kdisk\ distribution as their partition.
The formal theorem we prove is the following.

\begin{theorem}\label{thm:partitionoing}
	Let $G=(V,E)$ be a $d$-bounded graph. For every $k\ge 0$ and every $1\ge \delta >0$ the vertex set $V$ can be partitioned 
	into $r\leq f(\delta,d,k)$ subsets $S_1,\cdots, S_r$ and a set $T$ such that 
	\begin{itemize}
		\item
			for every $i \neq j$ there are no edges between $S_i$ and $S_j$,
		\item
			$|T| \le \delta d|V|$,
		\item
			and for every $i$ and every subset $X$ of $S_i$ with $\phi_G(X) \le \delta^2$ it holds that
			\[\norm{\kdiskfreqsub{G}{X} - \kdiskfreqsub{G}{S_i}} \leq 3 \delta. \]
	\end{itemize}
\end{theorem}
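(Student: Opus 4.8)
The plan is to build the partition in three stages — isolate the small connected components, iteratively carve the rest along frequency‑deviating sparse cuts, and finally coarsen the result by a net of the frequency simplex — and to keep $T$ small by only ever cutting along cuts of conductance at most $\delta$. Write $N:=\numdkdisks$ for the constant number of \kdisk\ isomorphism types; we may assume $\delta\le 1/(2d)$ (otherwise $\delta d|V|\ge |V|/2$ and a trivial choice of $T$ works), and set $C_0:=\lceil 1/(\delta d)\rceil$. \emph{Small components.} Put every component of $G$ of size at most $C_0$ into a part according to its isomorphism type, giving at most $f_1(\delta,d)$ parts ($f_1$ = number of isomorphism types of connected $d$‑bounded graphs on at most $C_0$ vertices). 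Such a part $S_i$ is a disjoint union of copies of a fixed graph $F$, so $\kdiskfreqsub{G}{S_i}$ is the \kdisk-frequency vector of $F$, and every copy fully contained in some $X\subseteq S_i$ contributes exactly this vector; each partially covered copy contributes at least one of the $\le\phi_G(X)d|X|\le\delta^2 d|X|$ boundary edges of $X$ and has at most $C_0$ vertices, so the vertices of $X$ in partial copies form at most a $\delta^2 dC_0\le\tfrac32\delta$ fraction of $X$, whence $\norm{\kdiskfreqsub{G}{X}-\kdiskfreqsub{G}{S_i}}\le 2\cdot\tfrac32\delta=3\delta$.

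\emph{Refinement.} Let $V_1$ be the union of the components of size $>C_0$; note that any $X\subseteq V_1$ with $\phi_G(X)\le\delta$ has $|X|\ge 1/(2\delta d)$ (if $X$ has no outgoing edge it is a union of big components, and otherwise $\phi_G(X)\ge 1/(d|X|)$). Starting from the partition of $V_1$ into its components with $T=\emptyset$, repeat the following while some part $S$ contains a set $X$ with $\phi_G(X)\le\delta$ and $\norm{\kdiskfreqsub{G}{X}-\kdiskfreqsub{G}{S}}>\delta/2$: move into $T$ one endpoint in $X$ of each of the $\le\delta d|X|$ edges between $X$ and $S\setminus X$, and replace $S$ by the parts $X$ (with those endpoints removed) and $S\setminus X$. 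Distinct parts stay non‑adjacent. The process terminates by a Szemer\'edi‑type energy argument: $\Phi:=\sum_{S}|S|\sum_\Delta\kdiskfreqsubent{G}{S}{\Delta}^2+2|T|$ is bounded by $3|V_1|$, and each step raises it by at least $\tfrac{(\delta/2)^2}{N}|X|-|Y|+2|Y|\ \ge\ \tfrac{(\delta/2)^2}{N}\cdot\tfrac1{2\delta d}=\Omega_\delta(1)$: the split gains $\tfrac{|X||S\setminus X|}{|S|}\lVert\kdiskfreqsub{G}{X}-\kdiskfreqsub{G}{S\setminus X}\rVert_2^2\ge\tfrac{(\delta/2)^2}{N}|X|$ by Cauchy–Schwarz together with $\lVert\kdiskfreqsub{G}{X}-\kdiskfreqsub{G}{S\setminus X}\rVert_1=\tfrac{|S|}{|S\setminus X|}\norm{\kdiskfreqsub{G}{X}-\kdiskfreqsub{G}{S}}$, while removing the $|Y|\le\delta d|X|$ vertices costs at most $|Y|$ in the first sum but contributes $+2|Y|$ through $2|T|$. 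At the end every part $S$ of $V_1$ has the property that every $X\subseteq S$ with $\phi_G(X)\le\delta$ satisfies $\norm{\kdiskfreqsub{G}{X}-\kdiskfreqsub{G}{S}}\le\delta/2$. (The energy bound only limits the number of refinement parts by $O_\delta(|V_1|)$, which is why the next step is needed.)

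\emph{Coarsening.} Fix a $(\delta/4)$‑net $\mathcal N$ of the probability simplex in $\R^N$ under $\ell_1$; it has size $P=P(\delta,N)$. Assign each part of $V_1$ to a nearest net point and merge all parts sharing a net point. Each resulting part of $V_1$ is a union $S^*=S_{i_1}\cup\dots\cup S_{i_t}$ of refinement parts that are pairwise non‑adjacent in $G$ with $\norm{\kdiskfreqsub{G}{S_{i_\ell}}-\kdiskfreqsub{G}{S^*}}\le\delta/2$ for every $\ell$, and merging creates no new edges, so together with the small‑component parts we have partitioned $V$ into $r\le f_1(\delta,d)+P=:f(\delta,d,k)$ parts and $T$ with no edges between distinct parts. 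For the frequency condition on $S^*$: given $X\subseteq S^*$ with $\phi_G(X)\le\delta^2$, set $X_\ell:=X\cap S_{i_\ell}$; since the $S_{i_\ell}$ are mutually non‑adjacent, $e(X_\ell,V\setminus X_\ell)=e(X_\ell,V\setminus X)$, so $\sum_\ell\phi_G(X_\ell)|X_\ell|=\tfrac1d\sum_\ell e(X_\ell,V\setminus X_\ell)=\tfrac1d e(X,V\setminus X)\le\delta^2|X|$, hence the indices with $\phi_G(X_\ell)>\delta$ carry total weight $<\delta|X|$, and for the remaining ones the refinement guarantee gives $\norm{\kdiskfreqsub{G}{X_\ell}-\kdiskfreqsub{G}{S^*}}\le\delta/2+\delta/2=\delta$; as $\kdiskfreqsub{G}{X}$ is the induced convex combination of the $\kdiskfreqsub{G}{X_\ell}$, we get $\norm{\kdiskfreqsub{G}{X}-\kdiskfreqsub{G}{S^*}}\le\delta+2\delta=3\delta$.

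\emph{The size of $T$, and the main obstacle.} Each refinement step adds at most $\delta d|X|$ vertices to $T$, so $|T|\le\delta d\sum_{\text{steps}}|X|$, and it suffices to run the refinement so that $\sum_{\text{steps}}|X|\le|V_1|$ — for instance, so that the carved‑off sets are pairwise disjoint (a set split off is never split again). Establishing this is the delicate point and I expect it to be the main obstacle: ideally one shows that whenever a part has a frequency‑deviating $\le\delta$‑conductance cut it has one whose inside $X$ is itself frequency‑stable, so that $X$ can be frozen as a final part; lacking such a clean statement one must instead always carve along the cheaper side — using $e(X,S\setminus X)\le d\min(|X|,|S\setminus X|)$ together with $\phi_G(X)\le\delta$ — and charge the removed edges against the energy $\Phi$, which forces a careful coupling of the three parameters (the target $\delta$, the conductance threshold, and the deviation threshold) and a bound on the depth/imbalance of the recursion. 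The Stage‑3 verification above is robust to shrinking the latter two thresholds to any sufficiently small functions of $\delta$, which is precisely where the slack for this bookkeeping comes from.
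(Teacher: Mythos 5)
Your stage\,3 (net coarsening) and the final computation are essentially the paper's, and your stage\,1 is fine, but the heart of the proposal --- the ``refinement'' loop --- has a genuine gap that you yourself flag and do not close: the bound $|T| \le \delta d |V|$. Your energy argument only bounds the \emph{number} of cutting steps by $O_{\delta,d,k}(|V|)$; it says nothing about $\sum_{\text{steps}} |X|$, and since a part produced by a cut can be cut again (the inside of one cut can sit inside the inside of many later cuts), the same vertices are charged repeatedly and $\sum_{\text{steps}} |X|$ can be far larger than $|V_1|$. Each step pays up to $\delta d |X|$ vertices into $T$, so nothing prevents $|T|$ from exceeding $\delta d |V|$ by an unbounded factor. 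This is not a bookkeeping detail that follows from ``coupling the thresholds'': with your stopping rule (cut whenever some $X$ has $\phi_G(X)\le\delta$ and frequency deviation $>\delta/2$) there is no mechanism forcing the carved sets to be disjoint or the recursion depth to be bounded, so the second bullet of the theorem is simply not proved.

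The idea you are missing is that the frequency-deviation condition on the cuts can be dropped entirely, and replaced by \emph{minimality}: the paper repeatedly carves off an (inclusion-)minimal set $A$ with $\phi_G(A)\le\delta$ from the remaining vertex set $W$, moves only the outside neighbours of $A$ into $T$, and never touches $A$ again. The carved atoms $A_1,A_2,\dots$ are pairwise disjoint, so $|T|\le \sum_i \delta d |A_i| \le \delta d|V|$ is immediate --- exactly the step your scheme cannot deliver. Minimality then does the work your ``refinement guarantee'' was meant to do, but against \emph{arbitrary} reference vectors: for $X\subseteq S_i$ with $\phi_G(X)\le\delta^2$, any atom $A_j$ that $X$ meets but does not contain satisfies $\phi_G(X\cap A_j)>\delta$, and all edges leaving $X\cap A_j$ leave $X$ (they go to $A_j\setminus X$ or to $T$), so by the same Markov-type count you use in stage\,3 at most a $\delta$-fraction of $X$ lies in partially covered atoms; the fully covered atoms are handled by the net, giving $\norm{\kdiskfreqsub{G}{X}-\kdiskfreqsub{G}{S_i}}\le 3\delta$. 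In short: carve minimal sparse sets once and for all, group them by a net afterwards; do not try to make the parts internally frequency-stable during the carving, because that is what forces re-cutting and destroys the control of $T$.
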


\begin{proof}
We will first construct a partition of $V$ into sets $A_1,\dots, A_t$ for some (possibly very large) value of $t$ and a 
set $T$ such that $|T| \le \delta d |V|$ and such that there are no edges between any pair of $A_i$ and $A_j$.
Then we construct each set $S_i$ as a union of some of the sets $A_j$. Finally, we prove that the $S_i$ satisfy the 
third property (the first two follow from the construction of the $A_j$).

We start with $T=\emptyset$ and $W=V$.
Let $A$ be a subset of vertices of $W$ with $\phi_G(A) \le \delta$. We may assume that $A$ contains no proper subset
with this property (otherwise, we take this subset). We put the neighbors of vertices from $A$ that are not in $A$
into the set $T$ and remove $T$ from $W$. We store the set $A$ as $A_1$ and remove it from $W$. We then repeat this process
as long as possible computing the sets $A_2,A_3,\dots$. We observe that every vertex is removed at most once from $W$. 
Whenever we remove a set $A_i$ we move at most $\delta d |A_i|$ neighbors into $T$ since $\phi_G(A_i) \le \delta$.
Hence, $|T| \le \delta d |V|$. Furthermore, we observe that by construction there are no edges between $A_i$ and $A_j$ 
for any $i \not= j$.

It remains to construct the sets $S_i$. For this purpose, we put a $\delta$-net over the space of all \kdisk\ frequency vectors, i.e. 
we compute a smallest set $N=\{v_1,\dots,v_{|N|}\}$ of frequency vectors such that every frequency vector there exists a vector in $N$ within $l_1$ distance at most
$\delta$. We observe that $|N|$ is a function of $k,d$ and $\delta$. %
We then define $S_i$ to be the union of all $A_j$
that have $v_i$ as the closest vector to their frequency vector. It remains to prove that the $S_i$ satisfy the
third property for $\delta^2$. For this purpose consider an arbitrary subset $X \subseteq S_i$.
We consider $X\cap A_j$ for the sets $A_j$ whose union $S_i$ is. If $X\cap A_j \not= A_j$ then we know that $\phi_G(X\cap A_j) >\delta$. Recall that the edges that leave $X \cap A_j$ either go to $A_j \setminus X$ or to $T$, where $X \cap T = \emptyset$.
If $\phi_G(X) \le \delta^2$, then it holds that at most a $\delta$-fraction of the elements from $X$ can be from a 
subset $A_j$ with $\phi_G(X \cap A_j) > \delta$. This is true as otherwise the number of edges crossing $X$ and $V\setminus X$ is at least $\delta|X|\cdot \delta d$, which contradicts the assumption that $\phi_G(X) \le \delta^2$. %
Let $J$ be the set of all indices $j$ such that $A_j \cap X = A_j$. Hence we get
\begin{align*}
	& \kdiskfreqsub{G}{X} & \tagged{twocol}{\displaybreak[3]\\}
	= {}& \sum_{j} \sum_{x\in X\cap A_j} \frac{\kdiskfreqsub{G}{x}}{\lvert X \rvert} & \tagged{twocol}{\displaybreak[3]\\}\untagged{twocol}{\\}
	\untagged{twocol}{&&}= {}& \frac{1}{|X|} \Big( \sum_{j \in J} \sum_{x\in X\cap A_j} \kdiskfreqsub{G}{x} \tagged{twocol}{& \\ & \quad} + \sum_{j \notin J} \sum_{x\in X\cap A_j} \kdiskfreqsub{G}{x} \Big). &
\end{align*}
Now let us define $X_1 = \{ x\in X | x\in A_j, j\in J\}$ and $X_2 = X \setminus X_1$. We know that $|X_2| \leq \delta |X|$. %
We also observe that
\begin{equation*}
	\biggnorm{\frac{1}{|X_1|} \sum_{x\in X_1} \kdiskfreqsub{G}{x} - \kdiskfreqsub{G}{S_i}} \le \delta
\end{equation*}
and
\begin{equation*}
	\biggnorm{\frac{1}{|X_2|} \sum_{x\in X_2} \kdiskfreqsub{G}{x} - \kdiskfreqsub{G}{S_i}} \le 2
\end{equation*}
since all frequency vectors have $l_1$-norm $1$. It follows that %
\begin{alignat*}{10}
	& \bignorm{\kdiskfreqsub{G}{X} - \kdiskfreqsub{G}{S_i}} & \\
	= {}& \biggnorm{\frac{1}{|X|} \cdot \Big( \sum_{x\in X_1} \kdiskfreqsub{G}{x}
	\tagged{twocol}{ & \\ & \enskip}
	+ \sum_{x\in X_2} \kdiskfreqsub{G}{x} \Big) - \kdiskfreqsub{G}{S_i}} & \displaybreak[3]\\
	= {}& \biggnorm{\frac{1}{|X|} \cdot \Big( \sum_{x\in X_1} \kdiskfreqsub{G}{x} - \lvert X_1 \rvert \cdot \kdiskfreqsub{G}{S_i} & \\
	& \enskip + \sum_{x\in X_2}  \kdiskfreqsub{G}{x} - \lvert X_2 \rvert \cdot \kdiskfreqsub{G}{S_i} \Big) } & \displaybreak[3]\\
	\le {}& \frac{|X_1|}{|X|} \cdot \delta + \frac{|X_2|}{|X|} \cdot 2 & \\
	\le {}& 3 \delta.
\end{alignat*}
This completes the proof of the theorem.
\end{proof}

\section{Conclusions}

We have shown that every constant-time testable property in the bounded-degree graph model is either finite or contains an
infinite hyperfinite subproperty. We hope that this result is a first step to obtain a full characterization of all testable
properties in bounded-degree graphs. Unfortunately, a similar result cannot be derived for expander graphs, i.e. it is not
true that every testable infinite property that is not hyperfinite contains an infinite family of expander graphs or graphs
that are close to expander graphs. The structure of this counter-example motivated us to study partitionings of bounded-degree
graphs into sets of vertices such that the distribution of \kdisks\ on any subset with bounded expansion is close to the distribution 
of the set. We hope that this partitioning will be helpful to make further progress towards a characterization of all testable
properties in bounded-degree graphs.

\bibliographystyle{alpha}
\bibliography{BeyondHyperfinite}
\end{document}